\let\latextextsuperscript\textsuperscript
\definecolor{color3}{RGB}{140,29,49}
\definecolor{color2}{RGB}{31,107,76}
\definecolor{color1}{RGB}{29,66,91}
\Crefname{lemma}{Lemma}{Lemmas}
\Crefname{proposition}{Proposition}{Propositions}
\Crefname{definition}{Definition}{Definitions}
\Crefname{theorem}{Theorem}{Theorems}
\Crefname{conjecture}{Conjecture}{Conjectures}
\Crefname{corollary}{Corollary}{Corollaries}
\Crefname{example}{Example}{Examples}
\Crefname{section}{Section}{Sections}
\Crefname{appendix}{Appendix}{Appendices}
\Crefname{figure}{Fig.}{Figs.}
\Crefname{equation}{eq.}{eqs.}
\Crefname{table}{Table}{Tables}
\Crefname{item}{Property}{Properties}
\Crefname{remark}{Remark}{Remarks}
\newtheorem{theorem}{Theorem}
\newtheorem{definition}[theorem]{Definition}
\newtheorem{corollary}[theorem]{Corollary}
\newtheorem{lemma}[theorem]{Lemma}
\newtheorem*{rep@theorem}{\rep@title}
\newcommand{\newreptheorem}[2]{%
	\newenvironment{rep#1}[1]{%
		\def\rep@title{#2 \ref{##1} (restated)}%
		\begin{rep@theorem}}%
		{\end{rep@theorem}}}
\newcommand{\dx}{0.02}
\newcommand{\ts}{.95cm}
\newcounter{tx}
\newcounter{ty}
\tikzset{
    hatch distance/.store in=\hatchdistance,
    hatch distance=4pt,
    hatch thickness/.store in=\hatchthickness,
    hatch thickness=2pt
}
\pgfqpoint{\hatchthickness}{10pt}}
\pgfqpoint{\hatchdistance}{10pt}}
\renewcommand\${
\setcounter{tx}{0}
\addtocounter{ty}{-1}
}
\DeclareDocumentCommand{\tile}{m o}{
  \addtocounter{tx}{1}
  \addtocounter{ty}{0}
  \begin{scope}[
    shift={(\thetx,\thety)}
  ]
  #1
  \draw[fill=white] (.1,.1) rectangle (.9,.9);
  \IfValueT{#2}{
  		\node[] at (.5,.5) {#2};
  }
  \end{scope}
}
\DeclareDocumentCommand{\defineEdge}{ m m m m m }{
	\DeclareDocumentCommand{#1}{ s m } {
		\draw[fill=colour##2,colour##2,rounded corners=1] (#2) rectangle (#3);
		\IfBooleanT{##1}{\draw[fill=white,white,line width=2*\dx*\ts] (#4) rectangle (#5);}
	}
}
\defineEdge\edgeN{.3,.8}{.7,1}{.45,.8}{.55,1}
\defineEdge\edgeE{.8,.3}{1,.7}{.8,.45}{1,.55}
\defineEdge\edgeS{.3,0}{.7,.2}{.45,0}{.55,.2}
\defineEdge\edgeW{0,.3}{.2,.7}{0,.45}{.2,.55}
\DeclareDocumentCommand{\Tomg}{m m m m o}{%
	\tile{\edgeN#1\edgeE#2\edgeS#3\edgeW#4}[#5]
}
\def\T(#1,#2,#3,#4) {\Tomg{#1}{#2}{#3}{#4}}
\def\TT(#1,#2,#3,#4,#5) {\Tomg{#1}{#2}{#3}{#4}[#5]}
\def\Ts(#1,#2,#3,#4){\begin{tiles}*%
\Tomg{#1}{#2}{#3}{#4}%
\end{tiles}}
\def\TTs(#1,#2,#3,#4,#5){\begin{tiles}*%
\Tomg{#1}{#2}{#3}{#4}[#5]%
\end{tiles}}
\definecolor{colour-}{RGB}{225,225,225}
\definecolor{colour0}{RGB}{0,0,0}
\definecolor{colour1}{RGB}{155,155,155}
\definecolor{colourR}{RGB}{255,34,44}
\definecolor{colourB}{RGB}{11,92,255}
\definecolor{colourD}{RGB}{200,0,100}
\definecolor{colourT}{RGB}{192,192,192}
\definecolor{colourS}{RGB}{235,235,235}
\newcommand\TMstate[2]{\shortstack[c]{$#1$\\$#2$}}
\tikzstyle{qubit}=[circle,draw,fill,thick,
\tikzstyle{mqubit}=[circle,draw,fill=white,thick,
\tikzstyle{heavy}=[ultra thick]
\newcommand\ii{\mathrm i}
\newcommand{\1} {\ensuremath{\mathds 1}}
\newcommand{\field}[1] {\mathds{#1}}
\newcommand{\hs}{\mathcal H}
\newcommand{\cS}{\mathcal{S}}
\newcommand{\Hsim}{H\Sim}
\newcommand{\Heis}{^{\operatorname{Heis}}}
\DeclareMathOperator{\poly}{poly}
\newcommand\Sim{_\mathrm{sim}}
\newcommand\Tar{_\mathrm{target}}
\newcommand\Hist{_\mathrm{hist}}
\newcommand\Hor{^\mathrm{hor}}
\newcommand\Ver{^\mathrm{vert}}
\newcommand\Tri{_{\tikz{\fill[black] (0,0) -- (.16,0) -- (0,.16) -- cycle;}}}
\newcommand\Sq{_{\tikz{\fill[black] (0,0) -- (.16,0) -- (.16,.16) -- (0,.16) -- cycle;}}}
\newcommand\Flag{_{\tikz{\fill[blue] (0,0) circle[radius=.06]; \draw[line width=1pt] (0,0) -- (0,.16);}}}
\newcommand\Glag{_{\tikz{\fill[red] (0,0) circle[radius=.06]; \draw[line width=1pt] (0,0) -- (0,.16);}}}
\DeclareDocumentCommand{\norm}{m}{\ensuremath{\left\|#1\right\|}}
\DeclareMathOperator{\BigO}{O}
\newcommand{\R}{\mathds{R}}
\newcommand{\Z}{\mathds{Z}}
\title{\makebox[0pt]{Universal Translationally-Invariant Hamiltonians}\\[1cm]}
\author{
Stephen Piddock\textsuperscript{1,2,}\thanks{stephen.piddock@bristol.ac.uk}
\and
Johannes Bausch\textsuperscript{3,}\thanks{jkrb2@cam.ac.uk}
}
\date{
${}^1$School of Mathematics, University of Bristol, UK \\
${}^2$Heilbronn Institute for Mathematical Research, Bristol, UK \\
${}^3$DAMTP, University of Cambridge, UK
\\[1cm]
{January 21, 2020}
}
\begin{document}
\maketitle
\begin{abstract}
    In this work we extend the notion of universal quantum Hamiltonians to the setting of translationally-invariant systems.
    We present a construction that allows a two-dimensional spin lattice with nearest-neighbour interactions, open boundaries, and translational symmetry to simulate any local target Hamiltonian---i.e.\ to reproduce the whole of the target system within its low-energy subspace to arbitrarily-high precision.
    Since this implies the capability to simulate non-translationally-invariant many-body systems with translationally-invariant couplings, any effect such as characteristics commonly associated to systems with external disorder, e.g.\ many-body localization, can \emph{also} occur within the low-energy Hilbert space sector of translationally-invariant systems.
    Then we sketch a variant of the universal lattice construction optimized for simulating translationally-invariant target Hamiltonians.
    
    Finally we prove that qubit Hamiltonians consisting of Heisenberg or XY interactions of varying interaction strengths restricted to the edges of a connected translationally-invariant graph embedded in $\R^D$ are universal, and can efficiently simulate any geometrically local Hamiltonian in $\R^D$.
\end{abstract}

\clearpage
\tableofcontents

\section{Introduction}\label{sec:intro}
Quantum computing has shown promise as an emerging technology to solve instances of difficult computing tasks with higher efficiency than possible classically; from optimisation \cite{Gilyen2019,Montanaro2020}, linear algebra \cite{Harrow2009QuantumEquations,Berry2017}, search and number-theoretic problems \cite{Montanaro2016}, language processing \cite{Wiebe2019,Aaronson2018}, to machine learning \cite{Peruzzo2014,Wang2019,McClean_2016,Li2019SublinearQA,Bausch2019c}.
One particularly promising application in the realm of near-term quantum computing---without access to full error-correction---is that of analogue quantum simulation \cite{Lloyd1996,Knee2015,Bravyi2002,Babbush2018,Childs2019}.

An alternative to digital quantum simulation---where a physical system is simulated using methods akin to the ones we would employ on a classical computer---analogue simulation aims to reproduce the physics of a target Hamiltonian of interest with another Hamiltonian over which the user has full control, by implementing the target dynamics directly within the host system.
A rigorous definition of analogue simulation was given in \cite{CMP} (extending that in \cite{Bravyi-Hastings}), where intuitively the entirety of the target Hamiltonian is contained in the low energy space of the simulator Hamiltonian. This is strong enough to capture many important properties of the target systems, such as the full spectrum---eigenvectors and eigenvalues---, partition function, and correlation functions.

Understanding which quantum Hamiltonians can simulate each other is thus important not only for potentially implementing such simulations in the laboratory, but also to understand the behaviour, complexity, and phenomenology that can emerge within the simulating host system.

It was also shown in \cite{CMP} that simple families of Hamiltonians can in fact simulate all other finite dimensional Hamiltonians, a property called \emph{universality}.
For example systems of qubits consisting only of Heisenberg interactions (or alternatively XY interactions) with varying interactions strengths were both shown to be universal, even when restricted to the edges of a square lattice.
However, all universality results to date concerned families of interactions whose interaction strengths varied from site to site. 

Our first main result is the existence of a translationally-invariant universal simulator, which consists of the same interaction between nearest neighbours in a 2D grid.

\begin{theorem}
	\label{thm:2Duniversal}
	For a square lattice of qudits of size $W\times H$, there exist 1-local interactions $h^{(1)},h^{(2)}$, as well as 2-local interactions $h^{\mathrm{vert}(1)}$,$h^{\mathrm{vert}(2)}$,$h^{\mathrm{hor}(1)}$,$h^{\mathrm{hor}(2)}$ such that the family of Hamiltonians of the form 
	\[H\Sim(W,H):=\Delta_1 H_1 +\Delta_2 H_2 \]
	are universal;
	where $H_1$ and $H_2$ are both given by
	\[H_k=\sum_{i=1}^{W}\sum_{j=1}^{H}h^{(k)}_{(i,j)}+\sum_{i=1}^{W}\sum_{j=1}^{H-1}  h^{\mathrm{vert}(k)}_{(i,j),(i,j+1)}
	+\sum_{i=1}^{W-1}\sum_{j=1}^{H} h^{\mathrm{hor}(k)}_{(i,j),(i+1,j)}\]
	and $\Delta_1,\Delta_2 \in \R$ are scaling parameters, which depend on $W$ and $H$ as
	\begin{align*}
	    \Delta_2 = W\times H
	    \quad\text{and}\quad
	    \Delta_1 = W^5 \times H^5.
	\end{align*}
	
	To $(\Delta, \eta, \epsilon)$-simulate a specific target local Hamiltonian $H\Tar$ on $n$ qudits with interaction strengths at most $J_{\max}$, the lattice size $W\times H$ must be chosen to be
	\begin{align*}
	    W, H &=  \poly\left(\Delta, \frac{1}{\eta},\frac{1}{\epsilon},J_{\max}\right)2^{\poly(n)} .
	\end{align*}
\end{theorem}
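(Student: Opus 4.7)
The plan is to adapt the Gottesman--Irani / Bausch--Cubitt--Ozols style history-state machinery, which produces position-dependent effective couplings from strictly translationally-invariant local rules, and then invoke the universality results for non-translationally-invariant lattice Hamiltonians from \cite{CMP} as a black box. Concretely, I would take $H_1$ to be a ``tiling + clock + Turing-machine'' Hamiltonian whose low-energy space on a $W\times H$ patch is spanned by a unique history-state branch per admissible ground configuration: $H_1$ enforces row/column indexing at each site, runs a reversible universal (quantum) Turing machine whose input is the index of the site, and whose output deterministically names which \emph{target} interaction (and on which virtual target qudit) that lattice site is supposed to play. Because $H_1$ has a uniform penalty scale $\Delta_1=W^5H^5\gg \Delta_2=WH$, its low-energy sector is frozen to these ``labelled'' configurations, up to errors controlled by perturbative simulation (first-order/Schrieffer--Wolff, using the machinery reviewed in \cite{CMP}).

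Second, the $H_2$ layer would apply translationally-invariant interactions $h^{\mathrm{hor}(2)}$ and $h^{\mathrm{vert}(2)}$ that, restricted to the $H_1$-ground space, read off the computed labels on each pair of neighbouring sites and conditionally implement either a fixed universal two-qubit interaction (e.g.\ Heisenberg or some edge gadget from \cite{CMP}) or the identity. The effect is that, within the low-energy subspace, $H_2$ realises a non-translationally-invariant interaction graph whose couplings are prescribed by the output of the Turing-machine computation encoded in $H_1$. Feeding as input to this machine a binary description of the desired target Hamiltonian $H\Tar$ then yields an effective Hamiltonian on $n$ virtual qudits matching the universal family of \cite{CMP} up to the required simulation parameters $(\Delta,\eta,\epsilon)$. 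Universality of that family then promotes the construction to simulate \emph{any} local target Hamiltonian.

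Third, I would track the parameter dependencies to obtain the stated lattice size. The computation must (i) have tape/time large enough to write out the $\BigO(\poly(n))$-bit description of $H\Tar$ with $J_{\max}$-precision coupling values, costing a patch of linear dimension $2^{\poly(n)}$ once one accounts for hard-coding the bitstring into the translationally-invariant rules via a standard counter/Kolmogorov trick; (ii) have clock length fine enough that the history-state perturbation theory achieves simulation error $\epsilon$ in operator norm with spectral compression $\eta$ and energy window $\Delta$, contributing the $\poly(\Delta,1/\eta,1/\epsilon,J_{\max})$ factor. Choosing $\Delta_1=W^5H^5$ is exactly the overhead needed so that second-order Schrieffer--Wolff corrections on a patch of area $WH$ are suppressed below the target simulation error, while $\Delta_2=WH$ balances the per-site amplitude of the active couplings against the total patch size.

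The main obstacle I anticipate is constructing the $H_1$ layer so that its low-energy subspace \emph{exactly} (up to controlled perturbative corrections) factorises as ``computational-label register'' $\otimes$ ``free virtual qudits on which $H_2$ acts'', while remaining strictly translationally invariant with open boundaries and only nearest-neighbour 2D couplings. The standard history-state construction gives a 1D clock; promoting it to 2D with open boundaries, and arranging that the boundary conditions themselves are induced translationally-invariantly (e.g.\ by penalising certain edge tiles that are automatically exposed only at the boundary), is the delicate part. Once that is in place, combining it with $H_2$ and the $(\Delta,\eta,\epsilon)$-simulation lemmas of \cite{CMP} should yield \Cref{thm:2Duniversal} with the claimed scaling.
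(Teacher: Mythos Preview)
Your architecture matches the paper's almost exactly: $H_1$ is indeed a tiling-plus-history-state Hamiltonian whose ground state encodes a Turing-machine computation seeded by the lattice dimensions, $H_2$ supplies Heisenberg interactions conditioned on that ground state, first-order perturbation (\cref{lem:firstorder}) projects to an effective non-translationally-invariant Heisenberg lattice, and universality follows from \cref{thm:XYsquare}. The obstacle you flag---getting a 2D history state with translationally-invariant rules and open boundaries---is exactly what the paper spends \cref{sec:-tiling,sec:-history} on, via a binary-counter tiling that reads off $W,H$ and a QTS whose tape winds through a triangular region.

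There is, however, a real gap in your $H_2$ mechanism. You write that $H_2$ ``conditionally implement[s] either a fixed universal two-qubit interaction \ldots or the identity'', i.e.\ a binary on/off label per edge. That yields, after first-order projection, a Heisenberg Hamiltonian with \emph{uniform} coupling strength on some subgraph of the square lattice---which is not the family \cref{thm:XYsquare} shows to be universal. The CMP result needs arbitrary real weights $\alpha_{ij},\beta_{ij}$ per edge. The paper's device for producing those weights from a single translationally-invariant $H_2$ term and only two energy scales is the point you are missing: the Turing machine does not write a discrete label at site $(i,j)$ but instead \emph{rotates a flag qubit} there into $\cos\theta_{ij}\ket{0}+\sin\theta_{ij}\ket{1}$, with $\theta_{ij}$ computed (via Solovay--Kitaev) from the binary description of $H\Tar$ carried in $H$. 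Then $H_2=\sum_{(i,j)} P_{(i,j)}\otimes h\Heis$ with $P$ the projector onto the flag, and first-order perturbation gives effective couplings $\Delta_2\,\bra{\Psi_0}P_{(i,j)}\ket{\Psi_0}\approx\alpha_{ij}$; the rescaling by $\Delta_2=WH$ is precisely what lets a sub-unit expectation value become an $O(1)$ coupling. A purely binary-label scheme would either need extra perturbative rounds (hence more than the two scales $\Delta_1,\Delta_2$ in the statement) or an appeal to a universality result for unit-weight Heisenberg on arbitrary subgraphs that is not available.
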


\Cref{thm:2Duniversal} is an immediate consequence of \cref{lem:H_AB,lem:ham-sim-props}, wherein $H\Sim$ is constructed explicitly, using techniques from Hamiltonian complexity literature; more specifically by encoding computation into the spectrum of a translationally-invariant Hamiltonian.
Similar techniques were used to prove that the local Hamiltonian problem is $\mathrm{QMA_{EXP}}$-complete \cite{Gottesman2009} for a translationally-invariant spin chain in 1D; in our setting we utilize such a ``history state'' Hamiltonian to prepare a background field within which an already-known universal Hamiltonian emerges as effective low-energy theory.
We achieve this by combining the history state Hamiltonian with a tiling background, akin to the methods utilized in \cite{Bausch2017,Bausch2019b}.

History state constructions all have the property that the local dimension of the qudits at each site is constant, but potentially very large. This property directly translates to our result here.
We leave it as an open problem to work out how much this local dimension can be reduced as was done for the case of the local Hamiltonian problem in \cite{Bausch2016}.

Note that the Hamiltonian $H\Sim$ from \cref{thm:2Duniversal} does not have periodic boundary conditions (but open ones), and thus is not translationally invariant in the sense of commuting with a shift operator.
Although previous work on the local Hamiltonian problem shows that the hardness results can hold up even for Hamiltonians with periodic boundary conditions \cite[Ch.~5.8]{Gottesman2009}, we do not expect to be able to prove that such systems are universal.
The ground space in \citeauthor{Gottesman2009}'s hardness construction is at least $N$-fold degenerate due to the shift invariance.
In fact, demanding translational invariance implies that \emph{all} eigenspaces of a Hamiltonian are also symmetric with respect to this shift operator.
Consequently this would seem to make it impossible for a universality result to hold, given the importance of locality in the definition of simulation.

We also comment on the efficiency of this simulation: both the number of qudits $WH$ and the interaction strengths $\Delta_1,\Delta_2$ of the simulator system are exponential in the number of qudits of the target system. 
This violates the definition of \emph{efficient} simulation presented in \cite{CMP}, which requires a polynomial relation between these quantities for simulating local Hamiltonians; but we note that is unavoidable for a translationally-invariant system such as this.
Since the only free parameters left in the Hamiltonian $\Hsim$ are the side lengths $W$ and $H$, it is impossible to simulate all (exponentially many) local Hamiltonians on $n$ qudits, using a simulator Hamiltonians of size at most $\poly (n)$.

If we are only interested in simulating systems which are themselves translationally invariant, then this counting-based counterargument no longer applies. 
Indeed, in \Cref{sec:TItarget}, we sketch how to construct a translationally-invariant family of Hamiltonians that can simulate any translationally-invariant Hamiltonian in 2D on $n$ qudits, using at most $\poly(n)$ qudits in the simulator.


In the final part of the paper, qubit interactions are considered.
However, since the methods of \cite{Gottesman2009,Bausch2016} do not apply, we are not able to prove universality for translationally-invariant systems.
Instead we consider qubit Hamiltonians of a single interaction type (Heisenberg interactions or XY interactions) restricted to the edges of a translationally-invariant graph, but with interaction strengths which are allowed to vary.

Let $G$ be a graph, and let $\cS$ be a set of 2-local interactions. We will use the notation $\cS_G$-Hamiltonians to denote the family of Hamiltonians of linear combinations of interactions from $\cS$ restricted to the edges of $G$. The interaction graph of any Hamiltonian in $\cS_G$-Hamiltonians must be a subgraph of $G$.

It is already known that qubit interactions on a square lattice are universal; an argument that is based on using perturbative gadget constructions~\cite{Oliveira-Terhal}.
These constructions are only efficient (i.e.\ the interaction strengths are $O(\poly(n))$) when the target Hamiltonian is spatially sparse.
In \cref{sec:geometriclocality}, we generalise these results to $D$ spatial dimensions, and show that the square---or (hyper-)cubic---lattice is not a special case, but in fact any connected translationally-invariant graph will do.

\begin{theorem}
	\label{thm:TIsimGL}
	Let $G$ be a connected translationally-invariant graph embedded in $\R^D$ for $D\ge 2$.
	Let $\cS$ be $\{XX+YY+ZZ\}$ or $\{XX+YY\}$.
	Then $\cS_G$-Hamiltonians are universal and can efficiently simulate all geometrically local Hamiltonians in $\R^D$.
\end{theorem}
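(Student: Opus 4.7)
The strategy is twofold: (a) establish universality and efficient simulation of geometrically-local Hamiltonians in $\R^D$ by $\cS$-interactions on the $D$-dimensional hypercubic lattice, generalising the 2D result of \cite{Oliveira-Terhal}; and (b) show that any connected translationally-invariant graph $G\subset\R^D$ contains a subdivision of the hypercubic lattice, reducing the general case to (a) via subdivision gadgets.

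For (a), I would follow the Oliveira-Terhal perturbative gadget approach, generalising their 2D ``mediator'' constructions to $D$ dimensions. Since the relevant gadgets (subdivision, fork, and the reduction from $k$-body to $2$-body interactions) are all geometrically local and translate verbatim to higher dimensions, this step is largely routine. For (b), observe that the symmetry group of $G$ is generated by $D$ linearly independent primitive translation vectors $t_1,\ldots,t_D$. Fixing a base vertex $v_0$ and using connectedness, I choose finite paths $P_i$ in $G$ from $v_0$ to $v_0+t_i$ for each $i$; translating these by integer combinations of the $t_j$ produces a subgraph $G'\subseteq G$ that is a subdivision of $\Z^D$ with uniformly bounded path lengths. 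Vertices and edges of $G$ not used by $G'$ are assigned zero coupling and contribute as free ancillas absorbed into the encoding isometry.

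Finally, I would apply a perturbative subdivision gadget to each path in $G'$, converting it into an effective direct $\cS$-interaction between its endpoints in the low-energy subspace; composed with the hypercubic-lattice simulation from (a), this yields an $\cS_G$-simulation of any geometrically local target. For Heisenberg interactions the subdivision is standard and preserves $SU(2)$ symmetry naturally; for XY interactions the $U(1)$ and $\Z_2$ parity symmetries---visible via the Jordan-Wigner transform---make the gadget delicate, but it can still be engineered. The combined overhead of skeleton extraction, subdivision gadget, and the $D$-dimensional Oliveira-Terhal simulation is polynomial in $n$ for geometrically-local targets, yielding the claimed efficiency. The principal obstacle I foresee is exactly the XY subdivision gadget: XY chains conserve total magnetisation and fermionic parity, so perturbative expansions along a chain tend to produce effective endpoint interactions that either vanish by symmetry or are polluted by unwanted higher-order terms, and careful balancing of chain lengths and couplings is required to isolate the desired $XX+YY$ operator between the endpoints.
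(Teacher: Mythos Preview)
Your overall strategy matches the paper's, but step (b) has a genuine gap. Simply choosing paths $P_i$ from $v_0$ to $v_0+t_i$ and translating them by integer combinations of the $t_j$ does \emph{not} in general yield a subdivision of $\Z^D$: the paths $P_i$ and their translates may share internal vertices (or even edges), and nothing in your argument prevents this. The paper's Lemma~\ref{lem:TIgraph=cubic} addresses exactly this issue with a careful inductive construction: at each stage $j$ it builds a connected patch $T_j$ and a new basis vector $\ket{w_{j+1}}$, which is generally \emph{not} $\ket{v_{j+1}}$ but rather $l_{j+1}\ket{v_{j+1}}+\sum_{i\le j} l_i\ket{w_i}$ for suitable integers, chosen precisely so that the translates of $T_{j+1}$ are pairwise disjoint. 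The upshot is that one finds the hypercubic lattice as a \emph{minor} of $G$ (by contracting each translate of the patch $T$), not directly as a subdivision; the paper then uses the degree-$3$ reduction from its Lemma~\ref{lem:GL=cubic} so that only three internally-disjoint paths need to be routed inside each copy of $T$, which is always possible in a connected finite graph via a simple ``central vertex'' argument.

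Your stated ``principal obstacle''---the XY subdivision gadget---is not actually an obstacle: subdivision (and fork, and crossing) gadgets for both Heisenberg and XY interactions are already established in \cite{Piddock-Montanaro} and quoted in the paper as Lemma~\ref{lem:PMgadgets}. The gadget uses a pair of mediator qubits with a strong interaction between them, and works at second order in perturbation theory without the parity/magnetisation obstructions you anticipate. So the real work you are missing is the disjointness argument for the lattice-minor extraction, not the gadgetry.
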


\subsection{Many-Body Localization in Translationally-Invariant Spin Models}
Many-body localization is an extensively-studied phenomenon occurring in many-body systems with disorder, instigated by \citeauthor{Anderson1958}'s seminal work on disorder-suppressed spin diffusion processes in lattices subject to a randomized energy potential \cite{Anderson1958}.
\citeauthor{Anderson1958} showed that at sufficiently low energy densities (as compared to the randomized potential), no transport takes place:
wavefunctions are concentrated within small spatial regions---they are \emph{localized}.

While there exists no strict definition of many-body localization, there exist various methods to verify properties attributed to the presence or absence of localization in experimental or theoretical models \cite{Smith2016,Smith2017a}, not all of which are equivalent or compatible \cite[Sec.~6.1.6]{crowley2017entanglement,Pancotti2018}.
For instance, one idea is to assess whether a system dynamically thermalizes, when an unbalanced initial state is supplied: if the state thermalizes, any memory of the initial imbalance is lost \cite{Schreiber2015}.
Another method is to measure the spread of correlations throughout the system:
the growth of entanglement entropy becomes an indicator of the diffusion velocity, and thus a measure of localization \cite{Enss2017} (cf.\ Lieb-Robinson bounds, \cite{Lieb1972}).
Since such systems exhibiting many-body localization can fail to thermalize at long time scales, they are candidate counterexamples for the eigenstate thermalization hypothesis \cite{Basko2006,DeTomasi2019}.

Despite many-body localization occurring naturally in systems with disorder, in recent years many translationally-invariant systems with similar indications have been studied, e.g.\ slow/nonergodic density relaxations in the thermodynamic limit \cite{Schiulaz2015}, or slow entanglement entropy growth, typical of localization in interacting systems with disorder, as mentioned; a phenomenon \citeauthor{Yao2016a} name \emph{quasi-many-body localization}  \cite{Yao2016a}.
In order to observe these quasi-localization effects, the disordered initial state method had been successfully employed \cite{Smith2017a}.
However, despite exhibiting localization-like effects, at \emph{very} long time scales ergodicity is restored in these systems;
only a tight binding model studied by \citeauthor{Vidal1998} prevents propagation of wave packets exactly \cite{Vidal1998}.

In \citeyear{Smith2017a}, \citeauthor{Smith2017a} empirically observed a translationally-invariant Hamiltonian on a one-dimensional lattice which exhibits dynamical localization without any external disorder \cite{Smith2017a}; the model the authors describe is that of spinless fermions, coupled via spin-$1/2$ particles; while the spin subsystem eventually equilibrates, the fermionic subsystem retains memory of the initial state \cite{Smith2017a}.
Advances in experimental and computational techniques have made it possible to study many more such effects, and we refer the reader to the extensive overview and discussion presented in \cite{Pancotti2019}, in which the author provide evidence for non-thermalizing behaviour in the quantum East model---where the Hilbert space fractions into small blocks, protected from transitions between the different sectors \cite{Rakovszky2019}.

As discussed in \cite[Sec.~6.3]{crowley2017entanglement}, a translationally-invariant many-body system is ``intrinsically incapable of reproducing'' all of the effects associated to many-body localization.

In contrast, in this work we show that within the framework of quantum simulation, a translationally-invariant system can---within its low-energy subspace---simulate any non-translationally-invariant Hamiltonian; in particular, this implies that the entire spectrum, and thus all of the observable effects such as correlations and thermalization behaviour, of a disordered local Hamiltonian can be reproduced to arbitrary precision by translationally-invariant couplings within the simulator system.

As mentioned in \cref{sec:intro}---due to the definition of simulation, which only requires to reproduce the spectrum of a target Hamiltonian within the low-energy subspace of the simulator, and due to a parameter counting argument---there exists an exponential spatial overhead when simulating a translationally non-invariant many-body system with a translationally-invariant one.
In essence, this means that any such many-body localization effects have to be concentrated within a small lattice region, and thus also only a small region of the Hilbert space.
Yet if our aim is to reproduce random interaction strengths across the entire spin lattice---instead of reproducing a \emph{specific} set of random couplings---similar methods as the ones developed herein can be used to achieve this effect (see the discussion in \Cref{sec:conclusion}).

\section{Background and Preliminaries}
\subsection{Universal Hamiltonians}
Here we provide the formal definitions of the simulation concepts introduced in the introduction.
\begin{definition}[Special case of definition in~\cite{CMP}; variant of definition in~\cite{Bravyi-Hastings}] 
	\label{def:sim}
	We say that $H'$ is a $(\Delta,\eta,\epsilon)$-simulation of $H$ if there exists a local isometry $V = \bigotimes_i V_i$ such that:
	\begin{enumerate}
		\item 
		There exists an isometry $\widetilde{V}$ such that $\widetilde{V} \widetilde{V}^\dag = S_{\le \Delta(H')}$ and $\|\widetilde{V} - V\| \le \eta$;
		\item 
		$\| H'_{\le \Delta} - \widetilde{V}H\widetilde{V}^\dag \| \le \epsilon$.
	\end{enumerate}
	We say that a family $\mathcal{F}'$ of Hamiltonians can simulate a family $\mathcal{F}$ of Hamiltonians if, for any $H \in \mathcal{F}$ and any $\eta,\epsilon >0$ and $\Delta \ge \Delta_0$ (for some $\Delta_0 > 0$), there exists $H' \in \mathcal{F}'$ such that $H'$ is a $(\Delta,\eta,\epsilon)$-simulation of $H$.
	We say that the simulation is efficient if, in addition, for $H$ acting on $n$ qudits and $H'$ acting on $m$ qudits , $\|H'\| = \poly(n,1/\eta,1/\epsilon,\Delta)$ and $m = \poly(n,1/\eta,1/\epsilon,\Delta)$; $H'$ is efficiently computable given $H$, $\Delta$, $\eta$ and $\epsilon$; and each isometry $V_i$ maps to $O(1)$ qudits.
\end{definition}

\begin{definition}[\cite{CMP}]
	\label{def:universal}
	We say that a family of Hamiltonians is \emph{universal} if \emph{any} (finite-dimensional) Hamiltonian can be simulated by a Hamiltonian from the family.
	We say that the universal simulator is \emph{efficient} if the simulation is efficient for all local Hamiltonians.
\end{definition}

\begin{theorem}
	\label{thm:classification}
	Let $\mathcal{S}$ be any fixed set of two-qubit and one-qubit interactions such that $\mathcal{S}$ contains at least one interaction which is not 1-local. Then:
	\begin{itemize}
		\item If there exists $U \in SU(2)$ such that $U$ locally diagonalises $\mathcal{S}$, then $\mathcal{S}$-Hamiltonians are universal classical Hamiltonian simulators~\cite{Science};
		\item Otherwise, if there exists $U \in SU(2)$ such that, for each 2-qubit matrix $H_i \in \mathcal{S}$, $U^{\otimes 2} H_i (U^\dagger)^{\otimes 2} = \alpha_i Z^{\otimes 2} + A_i \otimes \1 + \1 \otimes B_i$, where $\alpha_i \in \R$ and $A_i$, $B_i$ are arbitrary single-qubit Hamiltonians, then $\mathcal{S}$-Hamiltonians are universal stoquastic Hamiltonian simulators~\cite{Bravyi-Hastings,Cubitt-Montanaro};
		\item Otherwise, $\mathcal{S}$-Hamiltonians are universal quantum Hamiltonian simulators.
	\end{itemize}
\end{theorem}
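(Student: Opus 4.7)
The plan is to reduce the theorem to a case analysis over the local-unitary equivalence classes of the two-qubit interactions in $\mathcal S$, and then in each case either build an explicit gadget-based simulation of a class already known to be universal (classical Ising, stoquastic, or full quantum). The overall skeleton therefore has two independent parts: first, a structural/algebraic classification of the possible $\mathcal S$ up to conjugation by $U^{\otimes 2}$ for some fixed $U\in SU(2)$; and second, for each bucket produced by that classification, the perturbative simulation argument that reduces the problem to known universality results. Since the simulation relation from \cref{def:sim} is closed under composition (this is a standard fact used throughout the literature we cite), it suffices to exhibit, in each case, a chain of simulations ending at a known universal family.

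For the structural half, I would start by noting that any 2-qubit Hermitian $H_i$ may be written as $\sum_{\mu,\nu\in\{I,X,Y,Z\}}c^{(i)}_{\mu\nu}\sigma_\mu\otimes\sigma_\nu$, and then ask which ``interaction parts'' (the $\mu,\nu\neq I$ coefficients) are simultaneously conjugable, by some single $U\in SU(2)$ acting on both qubits, into the two canonical normal forms: (a) all interaction parts diagonal, i.e.\ $\alpha_i Z\otimes Z$ (up to added single-qubit terms); or (b) of the form $\alpha_i Z\otimes Z$ plus single-qubit terms. A clean way to get a canonical form is to diagonalise the real symmetric $3\times 3$ matrix of interaction coefficients $M_i = (c^{(i)}_{\mu\nu})_{\mu,\nu\in\{X,Y,Z\}}$ simultaneously across $i$ when possible; simultaneous diagonalisability corresponds to case (a), a common one-dimensional image corresponds to (b), and anything else falls through to the generic case. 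The main technical obstacle here is proving the dichotomy cleanly, i.e.\ that if no global $U$ puts every $H_i$ into form (b), then the set $\mathcal S$ must generate, via commutators and single-qubit dressings, enough Pauli structure to reach a known quantum-universal interaction such as Heisenberg or XY.

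For the simulation half I would then invoke each of the three cited results in turn. In case (a), local diagonalisation by $V_i=U^\dagger$ turns $\mathcal S$-Hamiltonians into linear combinations of classical Ising-like couplings, which are universal classical simulators by the argument of \cite{Science}; no perturbation theory is needed since the transformation is exact. In case (b), the same local isometry puts every $H_i$ into the stoquastic form $\alpha_iZZ+A_i\otimes\1+\1\otimes B_i$, and I would then apply the stoquastic universality construction of \cite{Bravyi-Hastings} (refined in \cite{Cubitt-Montanaro}), which uses perturbative mediator and subdivision gadgets to realise any stoquastic target within the low-energy subspace, keeping the construction within \cref{def:sim}. In the generic case (c), I would use the fact that $\mathcal S$ contains some interaction with a genuinely non-stoquastic transverse component, and use perturbative gadgets (again, composed appropriately) to simulate either a Heisenberg-type or an XY-type interaction on an arbitrary graph, at which point the universality results for those specific families apply.

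The delicate step I expect to be the main obstacle is not the gadgetry itself, which is by now fairly standard, but rather showing that the three cases really are mutually exclusive and exhaustive, i.e.\ that any $\mathcal S$ failing both normal-form conditions really does generate, under the allowed operations (sums with arbitrary real coefficients, restriction to subgraphs, and simulation), enough structure to produce a quantum-universal family; this is essentially a Lie-theoretic closure statement on the set of two-qubit Hermitians reachable from $\mathcal S$ modulo single-qubit dressings, and it is the part where most of the case analysis of \cite{Cubitt-Montanaro} has to be tracked carefully. A secondary but important bookkeeping task is verifying that in cases (b) and (c) the gadget parameters can be chosen so that the resulting simulation is efficient in the sense of \cref{def:sim} whenever the target is local, matching the ``efficient'' clause of \cref{def:universal}.
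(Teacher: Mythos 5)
First, a point of calibration: the paper does not prove \cref{thm:classification} at all. It is quoted verbatim as a background result in Section~2.1, with the three bullet points attributed respectively to \cite{Science}, to \cite{Bravyi-Hastings,Cubitt-Montanaro}, and (for the full statement) to the classification of \cite{CMP}. So there is no in-paper proof to compare your proposal against; the relevant comparison is with the cited literature, whose strategy your sketch does correctly reproduce at the architectural level: pass to the $3\times 3$ real symmetric matrix $M_i$ of two-local Pauli coefficients, use the $SU(2)\to SO(3)$ correspondence to reduce the normal-form question to simultaneous congruence of the $M_i$ (case (a) being simultaneous diagonalisability by a common rotation with only the $ZZ$-component surviving after the appropriate basis choice, case (b) being a common rank-one form $\alpha_i\,\ket{w}\!\bra{w}$), and then chain simulations down to a known universal family in each bucket.

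That said, judged as a proof rather than as a reading guide, your proposal has a genuine gap, and you name it yourself: the entire content of the third bullet is the claim that \emph{any} $\mathcal S$ escaping both normal forms can simulate a quantum-universal family such as Heisenberg or XY, and you defer this to ``tracking the case analysis of \cite{Cubitt-Montanaro} carefully.'' That case analysis is the theorem; it is not a bookkeeping step. It requires, among other things, gadgets that extract a single interaction type from an arbitrary non-(a)-non-(b) set (including sets where no individual $H_i$ is quantum-universal on its own but the combination is), handling of interactions with nonzero one-local parts that cannot be cancelled, and the antisymmetric cross-terms $c^{(i)}_{\mu\nu}\neq c^{(i)}_{\nu\mu}$ which your reduction to a symmetric $M_i$ quietly discards. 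A secondary imprecision: mutual exclusivity and exhaustiveness of the three cases are automatic from the ``otherwise'' structure of the statement, so the delicate point is not the trichotomy itself but establishing the stated conclusion in cases (b) and (c); and even in case (a), exact local diagonalisation only reduces you to diagonal Hamiltonians, after which the universality of, say, $\{ZZ\}$-Hamiltonians for all classical spin models is itself a nontrivial gadget construction from \cite{Science}, not a freebie. In short, the skeleton is right and matches the source material, but the proposal as written establishes none of the three bullets.
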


The following theorem, proven in \cite{CMP}, describes the simulation ability of XY and Heisenberg interactions when restricted to the edges of a 2D square lattice. We will generalise this result to higher spatial dimensions and other interaction patterns in \Cref{sec:qubits}.
\begin{theorem}[\cite{CMP}]
    \label{thm:XYsquare}
    Let $\square$ denote a 2D square lattice, and let $\cS$ be $\{XX+YY+ZZ\}$ or $\{XX+YY\}$. 
    Then $\cS_{\square}$-Hamiltonians are universal, and can efficiently simulate all spatially sparse local Hamiltonians
\end{theorem}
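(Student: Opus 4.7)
The plan is to combine the interaction-classification result with perturbative gadget constructions, reducing from arbitrary graphs to the 2D square lattice in two stages. First I would appeal to \cref{thm:classification}: both $\{XX+YY+ZZ\}$ and $\{XX+YY\}$ fail the locally-diagonalisable criterion and the $\alpha ZZ$-plus-local-terms criterion, so by that classification, when interactions are allowed on an \emph{arbitrary} graph the corresponding $\cS$-Hamiltonian families are universal quantum simulators. This gives a starting point that already simulates all finite-dimensional target Hamiltonians; the remaining task is purely geometric, namely to restrict the interaction graph to the 2D square lattice $\square$ without blowing up the simulation parameters beyond polynomial.

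Next, I would introduce mediator/subdivision gadgets in the style of Oliveira--Terhal perturbation theory. Given any $\cS$-interaction $h_{uv}$ the target wishes to place along an arbitrary edge $(u,v)$, the gadget inserts a path of fresh qubits between $u$ and $v$ (laid out along lattice sites) with strong on-site penalties and nearest-neighbour $\cS$-couplings. A second- or third-order Schrieffer--Wolff / self-energy expansion shows that the low-energy effective Hamiltonian contains precisely an $\cS$-type coupling between $u$ and $v$, with a renormalised strength that can be tuned by the gadget coefficients. Crucially one has to verify that the gadget output is of the same interaction type (Heisenberg or XY) rather than merely some 2-local operator; this follows by choosing the gadget's internal couplings to respect the $SU(2)$ symmetry (for Heisenberg) or $U(1)$ symmetry generated by total $Z$ (for XY), so that the effective operator must lie in the same symmetry-allowed subspace.

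The third step is to apply these gadgets in parallel to embed the interaction graph of a spatially sparse target Hamiltonian into $\square$. Spatial sparsity gives a polynomial-size planar embedding in which each original edge becomes a lattice path of $\poly(n)$ length, each vertex is visited $O(1)$ times, and crossings are handled by standard planar-crossing gadgets (built from the same subdivision primitive). Composing the $\poly(n)$ mediator gadgets in series and in parallel yields a $(\Delta,\eta,\epsilon)$-simulation of the target, where the composition lemma from \cite{CMP} controls how simulation parameters combine.

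The main obstacle is keeping the perturbative-gadget error under control through composition: each gadget requires a penalty $\Delta_{\mathrm{gad}}$ large compared to the next-order correction, and the number of layers of gadgets times the polynomial number of parallel copies must still yield $\Delta,1/\eta,1/\epsilon$ polynomial in $n$ and the target's $J_{\max}$. This is where spatial sparsity is essential: bounded geometric density ensures the planar embedding has $\poly(n)$ edge length and constant vertex degree, so that the cumulative perturbative error remains controllable and the final simulator strength stays polynomial, giving \emph{efficient} simulation rather than the exponential overhead that would otherwise arise.
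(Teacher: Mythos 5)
This theorem is imported from \cite{CMP} and not reproved in the paper; the closest in-paper analogue of its proof is \Cref{lem:GL=cubic}, which runs exactly the argument you sketch (classification via \Cref{thm:classification}, then subdivision/fork/crossing gadgets from \Cref{lem:PMgadgets} to snap a sparse interaction graph onto the lattice). So your overall architecture is the right one. Two corrections, one of which is substantive.

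The substantive one: you assert that each target edge ``becomes a lattice path of $\poly(n)$ length'' and that composing $\poly(n)$ subdivision gadgets in series still yields polynomial simulator strength. That is false. Simulating a single edge by a path of length $L$ requires $\Theta(\log L)$ nested rounds of second-order perturbation theory (each round shortens paths by a constant factor), and each round requires the new penalty scale $\Delta$ to dominate a power of the previous one, so $L=\poly(n)$ gives super-polynomial (quasi-polynomial to exponential) interaction strengths --- fine for bare universality, but it destroys the \emph{efficiency} claim. The whole point of condition~iii) of \Cref{def:spatialsparsity} (and of the paper's replacement notion, \Cref{def:geometriclocality}) is that each edge has $\BigO(1)$ length, so after choosing a constant grid spacing and locally rerouting, every path has $\BigO(1)$ length and only $\BigO(1)$ rounds of perturbation theory are needed; this is stated explicitly in the proof of \Cref{lem:GL=cubic}. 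Your final paragraph identifies error accumulation as ``the main obstacle'' but then waves it away with the same incorrect $\poly(n)$ path-length premise, so as written the efficiency half of the theorem is not established.

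The minor one: $\cS_{\square}$-Hamiltonians contain \emph{only} the 2-local Heisenberg (or XY) coupling --- no on-site terms are available --- so your gadget cannot use ``strong on-site penalties.'' The unperturbed term $H_0$ in \Cref{lem:PMgadgets} is instead a heavy $\cS$-coupling $h_{ab}$ on a mediator \emph{pair}, whose nondegenerate ground state (the singlet, for Heisenberg) plays the role of the penalized subspace. Your symmetry argument for why the effective interaction stays within the Heisenberg (resp.\ XY) family is a legitimate shortcut, though one still has to verify that both signs of the effective coupling are achievable, which is why \Cref{lem:PMgadgets} has separate positive and negative subdivision gadgets.
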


\subsection{Perturbation Theory}

In order to prove our main result we will need the following first order perturbation theory result about simulation, from \cite{Bravyi-Hastings} and rephrased slightly for our definition of simulation in \cite{CMP}.

\begin{lemma}[First-order simulation~\cite{Bravyi-Hastings, CMP}]
  \label{lem:firstorder}
  Let $H_0$ and $H_1$ be Hamiltonians acting on the same space, and let $\Pi_-$ be the projector onto the ground space of $H_0$. Suppose that $H_0$ has spectral gap $\ge 1$ and that there exists a local isometry $V$ such that $VV^{\dagger}=\Pi_-$ and
  \begin{equation}
\norm{V H\Tar V^\dag - \Pi_-H_1 \Pi_-} \le \epsilon/2.
  \end{equation}
  Then $H\Sim = \Delta H_0 + H_1$ $(\Delta/2,\eta,\epsilon)$-simulates $H\Tar$, provided that $\Delta \ge O(\|H_1\|^2/\epsilon + \|H_1\| / \eta)$.
\end{lemma}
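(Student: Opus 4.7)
The plan is to apply first-order Schrieffer--Wolff perturbation theory to $H\Sim = \Delta H_0 + H_1$, block-diagonalize it with respect to $\Pi_-$ and $\Pi_+ := \1 - \Pi_-$, and then verify the two conditions of \Cref{def:sim}. First I would assume without loss of generality that $\Pi_- H_0 = 0$ (a global energy shift does not affect whether $H\Sim$ simulates $H\Tar$), so that $H_0 \geq \Pi_+$. A Weyl-type inequality using $\|H\Sim - \Delta H_0\| = \|H_1\|$ then shows that the spectrum of $H\Sim$ decomposes into a low part contained in $[-\|H_1\|, \|H_1\|]$ and a high part contained in $[\Delta - \|H_1\|, \infty)$; for $\Delta \geq 4\|H_1\|$, the cutoff at $\Delta/2$ cleanly separates the two parts, and $S_{\leq \Delta/2}(H\Sim)$ has the same rank as $\Pi_-$.

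Next I would construct an anti-Hermitian $S$ with $\|S\| = \BigO(\|H_1\|/\Delta)$ such that the unitary $U := e^S$ block-diagonalizes $H\Sim$ with respect to $\Pi_-$ and $\Pi_+$. Writing $H_1 = H_1^{\mathrm{d}} + H_1^{\mathrm{od}}$ with $H_1^{\mathrm{d}} := \Pi_- H_1 \Pi_- + \Pi_+ H_1 \Pi_+$ and the off-diagonal remainder $H_1^{\mathrm{od}}$, the generator $S$ is the unique solution of $[\Delta H_0, S] = H_1^{\mathrm{od}}$ (which exists because $\Delta H_0$ has spectral gap $\Delta$ when restricted to $\Pi_+$, and is zero on $\Pi_-$), giving $\|S\| \leq \|H_1\|/\Delta$. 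A Baker--Campbell--Hausdorff expansion of $e^S H\Sim e^{-S}$ then shows that the low-energy block of the rotated Hamiltonian equals
\[
H_{\mathrm{low}} \;=\; \Pi_- H_1 \Pi_- \;+\; E_{\mathrm{err}}, \qquad \norm{E_{\mathrm{err}}} = \BigO(\|H_1\|^2/\Delta).
\]

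I would then take $\tilde V := U^\dag V$ as the required isometry. By construction $\tilde V \tilde V^\dag = U^\dag \Pi_- U = S_{\leq \Delta/2}(H\Sim)$ (using block separation), and condition~1 follows from $\norm{\tilde V - V} = \norm{(U^\dag - \1)V} \leq \norm{S} + \BigO(\norm{S}^2)$, which is at most $\eta$ provided $\Delta \geq C\,\|H_1\|/\eta$ for a suitable constant. For condition~2, the identities $H\Sim_{\leq \Delta/2} = U^\dag H_{\mathrm{low}} U$ and $\tilde V H\Tar \tilde V^\dag = U^\dag V H\Tar V^\dag U$ combined with unitary invariance of the operator norm yield
\[
\norm{H\Sim_{\leq \Delta/2} - \tilde V H\Tar \tilde V^\dag} \;=\; \norm{H_{\mathrm{low}} - V H\Tar V^\dag} \;\leq\; \norm{\Pi_- H_1 \Pi_- - V H\Tar V^\dag} + \norm{E_{\mathrm{err}}},
\]
which is bounded by $\epsilon/2 + \BigO(\|H_1\|^2/\Delta) \leq \epsilon$ once $\Delta \geq C'\|H_1\|^2/\epsilon$. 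Combining both thresholds yields exactly the stated condition $\Delta \geq \BigO(\|H_1\|^2/\epsilon + \|H_1\|/\eta)$.

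The main obstacle is rigorously controlling the Schrieffer--Wolff expansion: one must check that the BCH series for $e^S (\Delta H_0 + H_1) e^{-S}$ converges and that its off-diagonal remainder after absorbing the leading correction is indeed $\BigO(\|H_1\|^2/\Delta)$, not merely formal. A clean alternative is to invoke the Davis--Kahan $\sin\Theta$ theorem to directly bound the angle between the ground-space of $H_0$ and $S_{\leq \Delta/2}(H\Sim)$ by $\BigO(\|H_1\|/\Delta)$, define $U$ as the canonical unitary mapping one to the other, and then estimate matrix elements of $H\Sim$ in this rotated basis using resolvent expansions as in~\cite{Bravyi-Hastings}; either route produces identical bounds and yields the lemma as stated.
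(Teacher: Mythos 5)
The paper does not actually prove \cref{lem:firstorder}---it is imported verbatim from \cite{Bravyi-Hastings} and \cite{CMP}---so there is no in-paper argument to compare against; your sketch reproduces, in outline, the standard proof from those references, and that outline is sound. The spectral splitting (low band within $\norm{H_1}$ of zero, high band above $\Delta-\norm{H_1}$), the rotation of $\operatorname{ran}\Pi_-$ onto the low-energy eigenspace with angle $\BigO(\norm{H_1}/\Delta)$, and the identification of the rotated low block with $\Pi_-H_1\Pi_-$ up to an error $\BigO(\norm{H_1}^2/\Delta)$ together give exactly the two conditions of \cref{def:sim} with the stated thresholds on $\Delta$. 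The one place where the write-up is internally inconsistent---and which you partly flag yourself---is the step where $S$ is defined only as the first-order solution of $[\Delta H_0,S]=H_1^{\mathrm{od}}$: for that $S$ the unitary $e^S$ does \emph{not} exactly block-diagonalize $H\Sim$, so the asserted identity $\tilde V\tilde V^\dagger=e^{-S}\Pi_-e^{S}=S_{\le\Delta/2}(H\Sim)$ is false as written, and without it condition~1 of \cref{def:sim} is not verified. The correct repair is the one you name in your closing paragraph: take $U$ to be the \emph{exact} direct rotation between $\Pi_-$ and the spectral projector of $H\Sim$ below $\Delta/2$ (its existence, and the bound $\norm{U-\1}=\BigO(\norm{H_1}/\Delta)$, follow from a Davis--Kahan or resolvent-contour estimate), and only then expand the low block of $UH\Sim U^\dagger$ to first order. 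With that substitution the argument closes and matches the proofs in the cited sources.
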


\subsubsection{Mediator Gadgets}
\label{sec:PMmediator}
Oliveira and Terhal developed perturbative gadgets which "mediate" a 2-local interaction between qubits which do not directly interact, but which interact with a common "mediator" qubit. These gadgets were called subdivision, fork and crossing gadgets for the three slightly different ways in which they can be applied.

In \cite{Piddock-Montanaro}, a similar set of gadgets were constructed which use only XY interactions or Heisenberg interactions.
The effect of these interactions is described in the language of simulations in \Cref{lem:PMgadgets}, and in \Cref{fig:PMsubdivision} and \Cref{fig:PMgadgets}.
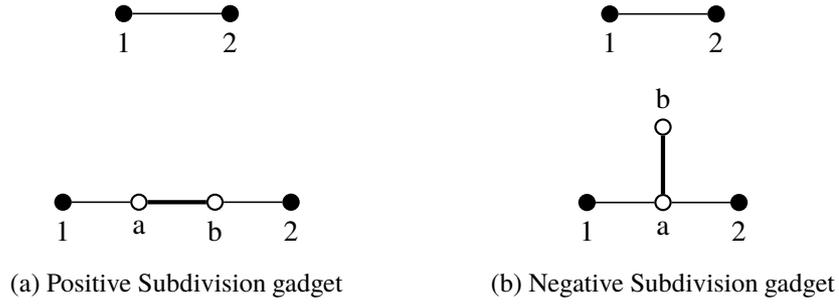
\begin{figure}[t]
	\centering
	\begin{subfigure}[b]{0.5\textwidth}
		\centering
		\begin{tikzpicture}
		\begin{scope}[shift={(-0.5,0)}]
		\node[qubit] (1) at (-1,1)[label=below:1]{};
		\node[qubit] (2) at (2,1) [label=below:2] {};
		\node[mqubit] (a) at (0,1)[label=below:a]{};
		\node[mqubit] (b) at (1,1)[label=below:b]{};
		
		\draw [black, semithick] (1) to (a);
		\draw [heavy] (a) to (b);
		\draw [black, semithick] (2) to (b);
		\end{scope}
		
		\begin{scope}[shift={(0,2.5)}]
		\node[qubit] (1) at (-0.7,1)[label=below:1]{};
		\node[qubit] (2) at (0.7,1) [label=below:2] {};
		
		\draw [black, semithick] (1) to (2);
		\end{scope}
		\end{tikzpicture}
		\caption{Positive Subdivision gadget}
		\label{fig:subdivisionpos}
	\end{subfigure}
	\begin{subfigure}[b]{0.4\textwidth}
		\centering
		\begin{tikzpicture}
		\begin{scope}
		\node[qubit] (1) at (-1,1)[label=below:1]{};
		\node[qubit] (2) at (1,1) [label=below:2] {};
		\node[mqubit] (a) at (0,1)[label=below:a]{};
		\node[mqubit] (b) at (0,2)[label=above:b]{};
		
		\draw [black, semithick] (1) to (a);
		\draw [heavy] (a) to (b);
		\draw [black, semithick] (2) to (a);
		\end{scope}
		
		\begin{scope}[shift={(0,2.5)}]
		\node[qubit] (1) at (-0.7,1)[label=below:1]{};
		\node[qubit] (2) at (0.7,1) [label=below:2] {};
		
		\draw [black, semithick] (1) to (2);
		\end{scope}
		\end{tikzpicture}
		\caption{Negative Subdivision gadget}
		\label{fig:subdivisionneg}
	\end{subfigure}
	\caption{Subdivision gadgets from~\cite{Piddock-Montanaro} for Heisenberg and XY interactions.
	In each case the top interaction pattern is simulated using the gadget underneath.
	White vertices denote mediator qubits and the thick lines indicate a heavy 2-local interaction term. }
	\label{fig:PMsubdivision}
\end{figure}

\begin{lemma}[\cite{Piddock-Montanaro}]
	\label{lem:PMgadgets}
	Let $H_0= h_{ab}$ and $\lambda, \mu \in \R$. 
	Let $\epsilon, \eta >0$ and let $\Delta \gg \frac{1}{\epsilon^2} +\frac{1}{\eta^2}$.
	Then $\Delta H_0 +\Delta^{\frac{1}{2}} H_2 +H_1$ $(\Delta/2, \epsilon,\eta)$-simulates the following interactions:
	\begin{enumerate}
		\item Subdivision \begin{enumerate}
			\item $\lambda h_{12}$ \quad when $H_2=h_{1a} +\lambda h_{2b}$ and $H_1=0$.
			\item $-\lambda h_{12}$ \quad when $H_2=h_{1a} +\lambda h_{2a}$ and $H_1=0$.
		\end{enumerate}
		\item Fork $\lambda h_{13} + \mu h_{23}$ \quad when $H_2=h_{3b} -\lambda h_{1a}-\mu h_{2a}$ and $H_1=\lambda \mu h_{12}$.
		\item Crossing $\lambda h_{14} + \mu h_{23} \quad \text{ when} \begin{array}{l}
		H_2=h_{1a}+h_{2a}-\lambda h_{b4} -\mu h_{b4} \\ H_1= h_{12}-\lambda h_{24} -\mu h_{13}+\lambda\mu h_{34}. \end{array}$
		
	\end{enumerate}
	See Figure~\ref{fig:PMsubdivision} and Figure~\ref{fig:PMgadgets} which shows the interaction graphs for each of these gadgets.
\end{lemma}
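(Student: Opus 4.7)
The plan is to invoke a second-order extension of \Cref{lem:firstorder} --- the natural generalization to perturbations of the form $\Delta^{1/2}H_2+H_1$, as appears in \cite{Bravyi-Hastings,CMP} --- and then reduce each claim to a direct calculation of the effective Hamiltonian on the ground space of the heavy mediator term $\Delta h_{ab}$. That lemma gives an effective operator of the form $\Pi_- H_1 \Pi_- - \Pi_- H_2 G H_2 \Pi_-$ (with $G$ the resolvent of $h_{ab}$ on its excited subspace), provided the first-order piece $\Pi_- H_2 \Pi_-$ vanishes; the error bound $\epsilon$ and isometry error $\eta$ will then follow from the $O(\Delta^{-1/2})$ tail of this expansion, which forces the threshold $\Delta \gg 1/\epsilon^2 + 1/\eta^2$.

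I would first record the spectrum of $h_{ab}$: in both the Heisenberg case $XX+YY+ZZ$ and the XY case $XX+YY$, the singlet $\ket{\psi^-}_{ab}=(\ket{01}-\ket{10})/\sqrt{2}$ is the unique ground state, with an $O(1)$ gap to the three excited states. The local isometry $V$ acts trivially on the logical qubits and inserts $\ket{\psi^-}_{ab}$ on the mediator pair, so $\Pi_- = VV^\dag$. The vanishing of the first-order term is immediate from the singlet sum rule $\bra{\psi^-}\sigma^\alpha_a\ket{\psi^-}=0$: in every gadget, each summand of $H_2$ touches the mediator pair at a single site $a$ or $b$, and so projects to zero.

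The heart of the argument is computing $\Pi_- H_2 G H_2 \Pi_-$. The key tool is the singlet-exchange identity $\sigma^\alpha_a\ket{\psi^-}_{ab} = -\sigma^\alpha_b\ket{\psi^-}_{ab}$, valid for every Pauli $\alpha$, which lets me rewrite any mediator attachment on $a$ as an attachment on $b$ (at the price of a sign) before applying $G$. Expanded this way, the matrix elements $\bra{\psi^-}h_{ka}Gh_{\ell b}\ket{\psi^-}$ and $\bra{\psi^-}h_{ka}Gh_{\ell a}\ket{\psi^-}$ reduce to finite sums over the three excited eigenspaces of $h_{ab}$. Two phenomena emerge, which I would verify case by case:
\begin{itemize}
\item The \emph{diagonal} pieces $\Pi_- h_{k\cdot}Gh_{k\cdot}\Pi_-$ collapse to multiples of $\1$ on qubit $k$, by the $SU(2)$- (resp.\ $U(1){\times}U(1)$-) invariance of the singlet, and so only shift the energy.
\item The \emph{cross} pieces $\Pi_- h_{ka}Gh_{\ell b}\Pi_-$ (opposite sites) and $\Pi_- h_{ka}Gh_{\ell a}\Pi_-$ (same site) produce effective interactions proportional to $h_{k\ell}$, with opposite signs determined by the $-1$ in the exchange identity.
\end{itemize}
For the positive and negative subdivision gadgets, this alone delivers $\pm\lambda h_{12}$. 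For the fork and crossing gadgets, the same-site cross pieces inside $H_2$ also produce unwanted two-body terms on the logical qubits --- e.g.\ a $\lambda\mu\, h_{12}$ contribution in the fork case from $(-\lambda h_{1a})G(-\mu h_{2a})$ --- and these are exactly what the explicit $H_1$ counterterms stated in the lemma are engineered to cancel, leaving only the desired $\lambda h_{13}+\mu h_{23}$ (resp.\ $\lambda h_{14}+\mu h_{23}$).

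The main obstacle is bookkeeping rather than conceptual: one has to enumerate the sixteen cross-terms of the crossing gadget (and smaller collections for subdivision/fork), track the signs coming from the singlet-exchange identity through $G$, and confirm that every residual piece is either a constant energy shift or precisely cancelled by the explicit $H_1$. The same calculation works uniformly for the Heisenberg and XY cases since the identity $\sigma^\alpha_a\ket{\psi^-}=-\sigma^\alpha_b\ket{\psi^-}$ holds for all $\alpha$, and for XY only the $\alpha\in\{X,Y\}$ components are ever probed by the $h_{k\cdot}$ terms. Once the algebra is verified, combining with the second-order simulation lemma closes the proof with the stated parameter threshold.
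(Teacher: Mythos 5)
The paper does not prove this lemma---it is imported verbatim from \cite{Piddock-Montanaro}---and your reconstruction is exactly the argument used there: apply the second-order simulation lemma with the singlet of $h_{ab}$ as the mediator ground state, note that $\Pi_- H_2 \Pi_-=0$ since single-site Pauli expectations vanish in the singlet, and evaluate $\Pi_- H_2 G H_2 \Pi_-$ using $\sigma^\alpha_a\ket{\psi^-}=-\sigma^\alpha_b\ket{\psi^-}$ to get $\pm h_{k\ell}$ from the cross terms (with $H_1$ cancelling the unwanted same-site products in the fork and crossing cases). Your proposal is correct in structure; only the normalization constants remain to be pinned down, as you acknowledge.
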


Furthermore, these gadgets can be applied to multiple qubits in a system in parallel without separate gadgets interfering with each other; see \cite{CMP} for details.

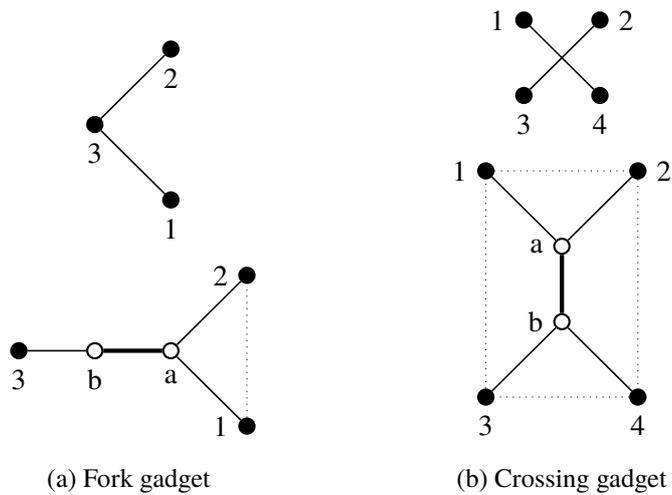
\begin{figure}[t]
	\centering
	\begin{subfigure}[b]{0.4\textwidth}
	\centering
		\begin{tikzpicture}
		\begin{scope}[rotate=90,yscale=-1,xscale=1]
		\node[qubit] (1) at (-1,2) [label=left:1] {};
		\node[qubit] (2) at (1,2)[label=left:2]{};
		\node[mqubit] (a) at (0,1)[label=below:a]{};
		\node[mqubit] (b) at (0,0)[label=below:b]{};
		\node[qubit] (3) at (0,-1)[label=below:3]{};
		
		\draw [heavy] (a) to (b);
		\draw [black, semithick] (1) to (a);
		\draw [black, semithick] (2) to (a);
		\draw [black, semithick] (3) to (b);
		\draw [dotted] (1) to (2);
		\end{scope}
		
		\begin{scope}[shift={(0,3)},rotate=90,yscale=-1,xscale=1]
		\node[qubit] (1) at (-1,1) [label=below:1] {};
		\node[qubit] (2) at (1,1)[label=below:2]{};
		\node[qubit] (3) at (0,0)[label=below:3]{};

		\draw [black, semithick] (3) to (1);
		\draw [black, semithick] (3) to (2);
		
		\end{scope}
		\end{tikzpicture}
		\caption{Fork gadget}
		\label{fig:fork}
	\end{subfigure}
	\begin{subfigure}[b]{0.4\textwidth}
		\centering
		\begin{tikzpicture}
		\begin{scope}
		\node[qubit] (1) at (-1,2) [label=left:1] {};
		\node[qubit] (2) at (1,2)[label=right:2]{};
		\node[mqubit] (a) at (0,1)[label=left:a] {};
		\node[mqubit] (b) at (0,0)[label=left:b]{};
		\node[qubit] (3) at (-1,-1)[label=below:3]{};
		\node[qubit] (4) at (1,-1)[label=below:4]{};
		
		\draw [heavy] (a) to (b);
		\draw [black, semithick] (1) to (a);
		\draw [black, semithick] (2) to (a);
		\draw [black, semithick] (3) to (b);
		\draw [black, semithick] (4) to (b);
		\draw [dotted] (1) to (2);
		\draw [dotted] (2) to (4);
		\draw [dotted] (4) to (3);
		\draw [dotted] (3) to (1);
		\end{scope}
		
		\begin{scope}[shift={(0,3)}]
		\node[qubit] (1) at (-0.5,1) [label=left:1] {};
		\node[qubit] (2) at (0.5,1)[label=right:2]{};
		\node[qubit] (3) at (-0.5,0)[label=below:3]{};
		\node[qubit] (4) at (0.5,0)[label=below:4]{};
		\draw [black, semithick] (1) to (4);
		\draw [black, semithick] (2) to (3);
		\end{scope}
		\end{tikzpicture}
		\caption{Crossing gadget}
		\label{fig:crossing}
	\end{subfigure}
	
	\caption[Subdivision, fork and crossing gadgets.]{%
		Fork and crossing gadgets from~\cite{Piddock-Montanaro} for Heisenberg or XY interactions.
		In each case the top interaction pattern is simulated using the gadget underneath.
		White vertices denote mediator qubits and the thick lines indicate a heavy 2-local interaction term.}
	\label{fig:PMgadgets}
\end{figure}

\clearpage
\section{Translationally-Invariant Qudit Hamiltonians}\label{sec:construction}

In this section, we construct a simulator Hamiltonian $H\Sim$ with translationally-invariant nearest-neighbour interactions on a square lattice with open boundary conditions, and such that its low energy space approximates an arbitrary target Hamiltonian $H\Tar$.
Since the only parameter to encode the target Hamiltonian's interactions in $H\Sim$ is the size of the simulating lattice $\Lambda\Sim$, there will naturally be an exponential overhead in the number of lattice sites as compared to the original Hilbert space of the target Hamiltonian, and so we are not concerned with the simulation being efficient here.
For the special case where the target Hamiltonian is itself translationally invariant, we present a modified variant in \Cref{sec:TItarget}.

To prove our main result, \Cref{thm:2Duniversal}, we recall that the Heisenberg interaction ($h\Heis=XX+YY+ZZ$) was proven to be universal in \cite{CMP}, even when restricted to the edges of a square lattice, see \Cref{thm:XYsquare}.
It will therefore suffice to simulate the following target Hamiltonian
\[H\Tar=\sum_{i=1}^{n}\sum_{j=1}^{n-1} \alpha_{ij} h\Heis_{(i,j),(i,j+1)}+\sum_{i=1}^{n-1}\sum_{j=1}^{n} \beta_{ij} h\Heis_{(i,j),(i+1,j)}\]
for arbitrary choices of $\alpha_{ij}$ and $\beta_{ij}$.

We will first prove this result assuming there exists a $H_A$ with some nice ground state properties, as outlined in \Cref{thm:HAexistence}.
Then in \Cref{sec:construction}, we will show that such a Hamiltonian can indeed be constructed.


First we show that it suffices to construct a translationally-invariant 2D lattice Hamiltonian with a constant spectral gap whose groundstate encodes the interaction strengths of the target Hamiltonian.

\begin{theorem}
	\label{thm:HAexistence}
	There exists a  translationally-invariant 2-local Hamiltonian $H_A$ on a 2D grid of qudits of dimension $d$ for some constant $d$, and a growing function $\Delta_2(W,H)$ such that for any choice of real numbers $\{\alpha_{ij}\}_{i,j \in \{1,\dots n\}}$ and $\{\beta_{ij}\}_{i,j \in \{1,\dots n\}}$ and $\delta>0$, there exists a choice of side lengths $W$ and $H$ for $H_A$ such that the following is true:
	\begin{enumerate}
		\item There is a unique ground state $\ket{\Psi_0}$ and the spectral gap is $\Omega(1)$
		\item There exist projectors $P^{(1)},P^{(2)}$ and $P^{(3)}$ on  $\field{C}^d$ such that\[ \bra{\Psi_0}P^{(3)}_{(i,j)}\ket{\Psi_0}= \begin{cases}
		0 & \text{ if }i\le n \text{ and } j\le n\\
		1 & \text{ otherwise}\\
		\end{cases} \]
		and for $i,j \le n$ we have
		\[\left|\bra{\Psi_0}P^{(1)}_{(i,j)}\ket{\Psi_0}-\frac{\alpha_{ij}}{\Delta_2}\right| \le \delta  \qquad \text{ and } \qquad \left| \bra{\Psi_0}P^{(2)}_{(i,j)}\ket{\Psi_0}-\frac{\beta_{ij}}{\Delta_2}\right| \le \delta.
		\]
		
	\end{enumerate}
\end{theorem}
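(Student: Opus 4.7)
The plan is to build $H_A$ as a sum of three translationally-invariant layers acting on disjoint internal degrees of freedom of each qudit: a \emph{tiling} layer that forces a rigid geometric structure; a \emph{history-state} layer that runs a quantum Turing machine inside a distinguished region; and an \emph{output} layer that rotates designated ancillas into the prescribed superpositions. Crucially $H_A$ itself is fixed, with no dependence on the numbers $\alpha_{ij},\beta_{ij}$: all information about these is encoded in the choice of side lengths $W$ and $H$, so that translational invariance of $H_A$ is preserved.

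First I would construct a classical Wang-tile background in the style of the multi-layer tilings of \cite{Bausch2017,Bausch2019b}: matching-rule penalties together with $1$-local corner-bias terms force a unique classical tiling of any $W\times H$ rectangle. In this tiling a distinguished $n\times n$ sub-rectangle in one corner is marked as the computational region, with the remaining sites carrying an inert padding symbol. The projector $P^{(3)}$ is taken to be the projector onto this padding symbol, giving the required indicator behaviour on $\ket{\Psi_0}$; by a standard Hall-matching argument this layer has a unique classical ground state and an $\Omega(1)$ gap. Second, inside the computational region I would embed a one-dimensional history-state Hamiltonian for a QTM, using translationally-invariant propagation and legality terms as in \cite{Gottesman2009, Bausch2016}. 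The QTM reads $W$ and $H$ via a translationally-invariant binary counter along the boundary; $W$ and $H$ are chosen large enough that $\log_2(WH)$ exceeds the total bit-length required to specify all $2n^2$ numbers $\alpha_{ij},\beta_{ij}$ to precision~$\delta$. The QTM extracts $n$ and approximations of each $\alpha_{ij}/\Delta_2,\beta_{ij}/\Delta_2$, walks to site $(i,j)$, and applies a sequence of controlled single-qubit rotations to prepare local ancillas in states of the form $\sqrt{1-\alpha_{ij}/\Delta_2}\,\ket 0+\sqrt{\alpha_{ij}/\Delta_2}\,\ket 1$ (with an additional sign ancilla to accommodate $\alpha_{ij}<0$, and with $\Delta_2(W,H)$ chosen to grow fast enough that all ratios lie in $[0,1]$). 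The projectors $P^{(1)},P^{(2)}$ are then the projectors onto the $\ket 1$ state of the respective ancilla coordinates.

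Third, the ground state of the combined Hamiltonian is the history state
\[
\ket{\Psi_0}\;=\;\frac{1}{\sqrt T}\sum_{t=0}^{T-1}\ket{t}\Hist\otimes U_t\cdots U_1\ket{\text{in}},
\]
tensored with the unique tiling configuration. The ancilla rotations are arranged to occur only in the final $\Theta(T)$ time steps of the QTM, so that the time-averaged probability $\bra{\Psi_0}P^{(1)}_{(i,j)}\ket{\Psi_0}$ equals $\alpha_{ij}/\Delta_2$ up to an error dominated by the finite-precision truncation and the fraction of pre-rotation steps; both contributions are made at most $\delta$ by choosing $T$ and the precision of the QTM large enough. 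Uniqueness of $\ket{\Psi_0}$ follows from uniqueness of the tiling together with the standard analysis of the Feynman--Kitaev clock. The clock gap is $\Omega(1/T^2)$, which I lift to $\Omega(1)$ by rescaling the clock terms by a factor $T^2$ and absorbing it into the definition of $\Delta_2(W,H)$; the combined gap then follows from a Nullspace Projection Lemma argument applied to the tiling and history-state layers.

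The main obstacle will be to ensure that the QTM can decode the entire data payload $(\alpha_{ij}),(\beta_{ij})$ using only the lattice dimensions as input, while keeping the local dimension $d$ of $H_A$ independent of $n$ and respecting strict translational invariance. The key idea is to encode the full binary payload in the digits of $W$ (or of the pair $W,H$), and to use a translationally-invariant boundary counter to stream those digits onto the QTM's input tape; constructions of exactly this flavour are worked out in \cite{Gottesman2009, Bausch2016, Bausch2017} and adapt to the present setting, with only the final output-rotation stage of the QTM being genuinely new.
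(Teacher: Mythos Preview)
Your approach is essentially the paper's: a tiling layer fixes a rigid classical background, a boundary binary counter extracts the side lengths, and a history-state QTM reads that input and rotates site-local ancillas to the prescribed angles. Your padding trick (pushing the rotations into a final $\Theta(T)$ block so the time-average tends to the target value) is a harmless variant of the paper's device, which instead toggles a local flag for exactly half of the clock steps to produce a clean factor of $1/2$.

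There is, however, one genuine geometric gap. You take the $n\times n$ corner square to be simultaneously the region on which the QTM runs \emph{and} the region determining $P^{(3)}$. These two roles are incompatible. For the $P^{(3)}$ condition you need the distinguished region to be exactly $n\times n$; but the QTM must store and process a payload of $\Theta\!\bigl(n^{2}\log(1/\delta)\bigr)$ bits, which for small $\delta$ far exceeds the $O(n^{2})$ bits available in an $n\times n$ block of constant local dimension. If you enlarge the computational region to give the QTM enough tape, then $P^{(3)}$ vanishes on too large a set and the requirement $\bra{\Psi_0}P^{(3)}_{(i,j)}\ket{\Psi_0}=1$ outside the $n\times n$ square fails. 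The paper resolves this by a \emph{nested} structure: it chooses $W=2^{n}+2^{b}$ with $b=\poly(n,\Xi)\gg n$, so that the two $1$'s in the binary expansion of $W$ mark both a large triangle of side $b$ (the QTM's workspace) and the $n\times n$ output square inside it (which alone fixes $P^{(3)}$); the height $H$ then carries the data $(A,B,\Delta_2,\epsilon)$ in binary. Your sketch needs this separation of QTM workspace from output square to go through. One further minor point: ``absorbing the $T^{2}$ rescaling into $\Delta_2(W,H)$'' conflates two different objects---$\Delta_2$ normalises the $\alpha_{ij}$, it is not a coupling constant multiplying $H_A$; the paper obtains the $\Omega(1)$ gap simply by dividing all of $H_A$ by its own gap $\Delta(H_A)$.
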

We remark that although the function $\Delta_2$ may look like an unusual part of the requirements of the Hamiltonian $H_A$, it is necessary because otherwise it would be impossible for $\bra{\Psi_0}P^{(1)}_{(i,j)}\ket{\Psi_0}$ to approximate any $\alpha_{ij},\beta_{ij}>1$.

We now prove that the existence of a Hamiltonian satisfying the properties of \Cref{thm:HAexistence} allows us to construct a 2D translationally-invariant Hamiltonian $H\Sim$ which can simulate an arbitrary (not necessarily translationally-invariant) Hamiltonian of Heisenberg interactions on a square lattice.
\begin{lemma}
	\label{lem:H_AB}
	Let $\mathcal{H}_A=(\field{C}^d)^{\otimes WH}$ and $\mathcal{H}_B=(\field{C}^2)^{\otimes WH}$ be two $W\times H$ sized grids of qudits and qubits respectively.
	Let $H_A$ acting on $\mathcal{H}_A$ satisfy the conditions of \Cref{thm:HAexistence}. 
	Let $P^{(1)},P^{(2)}$ and $P^{(3)}$ be projectors on $\field{C}^d$, and let
	
	\[H_{AB}= \sum_{i=1}^{W}\sum_{j=1}^{H-1} P^{(1)}_{(i,j)} \otimes h\Heis_{(i,j),(i,j+1)}+\sum_{i=1}^{W-1}\sum_{j=1}^{H} P^{(2)}_{(i,j)} \otimes h\Heis_{(i,j),(i+1,j)}
	\]
	act on $\mathcal{H}_A \otimes \mathcal{H}_B$.
	
	Then, provided that $\Delta_1 \ge \left(\Delta_2^2 W^2H^2/\epsilon +\Delta_2 WH/\eta \right)$, and $\delta \le \epsilon/(4\Delta_2 n^2)$, the Hamiltonian
	\[H\Sim:=\Delta_1 \left(H_A \otimes I_B +\sum_{i=1}^W\sum_{j=1}^H P^{(3)}_{(i,j)} \otimes \proj{1}_{(i,j)}\right) +\Delta_2 H_{AB} \]
	$(\Delta_1/2, \eta,\epsilon)$-simulates the Hamiltonian $H\Tar$
	\[H\Tar=\sum_{i=1}^{n}\sum_{j=1}^{n-1} \alpha_{ij} h\Heis_{(i,j),(i,j+1)}+\sum_{i=1}^{n-1}\sum_{j=1}^{n} \beta_{ij} h\Heis_{(i,j),(i+1,j)}.\]
	
\end{lemma}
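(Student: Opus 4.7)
The plan is to invoke \Cref{lem:firstorder} with the decomposition $H\Sim = \Delta_1 H_0 + H_1$, where
\[ H_0 \;:=\; H_A \otimes I_B \;+\; \sum_{i,j} P^{(3)}_{(i,j)} \otimes \proj{1}_{(i,j)}, \qquad H_1 \;:=\; \Delta_2 H_{AB}. \]

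First I would identify the ground space of $H_0$. Since the penalty term is positive semi-definite and by \Cref{thm:HAexistence} $H_A$ has a unique ground state $\ket{\Psi_0}$ with gap $\gamma_A = \Omega(1)$, any minimizer of $H_0$ must take the form $\ket{\Psi_0} \otimes \ket{\phi}$; and since $P^{(3)}_{(i,j)}$ acts on $\ket{\Psi_0}$ as $1$ on outside sites ($i > n$ or $j > n$) and as $0$ on active sites, the $\mathcal{H}_B$-component $\ket{\phi}$ must lie in the subspace $S_B$ of $\mathcal{H}_B$-states with $\ket{0}$ on every outside qubit. Hence $\Pi_- = \ketbra{\Psi_0} \otimes \Pi_{S_B}$, and the gap of $H_0$ is $\min(\gamma_A, 1) = \Omega(1)$; rescaling $H_0$ by a constant, absorbed into $\Delta_1$, makes the gap at least $1$ as \Cref{lem:firstorder} requires.

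Next I would define the isometry $V : (\field{C}^2)^{\otimes n^2} \to \mathcal{H}_A \otimes \mathcal{H}_B$ by $V\ket{\psi} = \ket{\Psi_0} \otimes \ket{\psi} \otimes \ket{0}^{\otimes (WH - n^2)}$ (placing $\ket{\Psi_0}$ on $\mathcal{H}_A$, $\ket{\psi}$ on the active-region qubits of $\mathcal{H}_B$, and $\ket{0}$s on its outside qubits), which satisfies $V V^\dag = \Pi_-$, and then compute the effective Hamiltonian $\Pi_- H_1 \Pi_-$ edge by edge. Three cases arise: (i) edges entirely in the active region pick up the scalar factor $\bra{\Psi_0} P^{(1)}_{(i,j)} \ket{\Psi_0}$ (or its $P^{(2)}$ analogue), which by \Cref{thm:HAexistence} approximates $\alpha_{ij}/\Delta_2$ (resp.\ $\beta_{ij}/\Delta_2$) to within $\delta$; (ii) edges with both endpoints outside restrict to a constant, since $\Pi_{S_B}$ forces both qubits to $\ket{0}$ and $\bra{00} h\Heis \ket{00} = 1$; (iii) boundary-crossing edges restrict to single-qubit $Z$-terms on the active endpoint, because $\bra{0}_b h\Heis_{ab} \ket{0}_b = Z_a$. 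Summing the case~(i) discrepancies over the at most $2n^2$ active edges gives a total error of $O(\Delta_2 \delta n^2)$, so the stated choice $\delta \le \epsilon/(4\Delta_2 n^2)$ bounds $\|V H\Tar V^\dag - \Pi_- H_1 \Pi_-\|$ below $\epsilon/2$.

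Finally I would verify the perturbation condition: since $H_{AB}$ is a sum of $O(WH)$ terms each of norm $O(1)$, $\|H_1\| = O(\Delta_2 WH)$, and the hypothesis $\Delta_1 \ge \Delta_2^2 W^2 H^2 /\epsilon + \Delta_2 WH /\eta$ subsumes the condition $\Delta \ge O(\|H_1\|^2/\epsilon + \|H_1\|/\eta)$ of \Cref{lem:firstorder}, which then delivers the claimed $(\Delta_1/2, \eta, \epsilon)$-simulation. The main technical obstacle is cases~(ii) and~(iii): naively the outside-edge constants and the boundary $Z$-fields could sum to $\Theta(\Delta_2 WH)$ in operator norm, completely swamping the target Hamiltonian. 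These contributions must therefore be neutralized by arranging $P^{(1)}, P^{(2)}, P^{(3)}$ to be pairwise orthogonal in the construction of $H_A$ (forcing $\bra{\Psi_0} P^{(1)}_{(i,j)} \ket{\Psi_0} = \bra{\Psi_0} P^{(2)}_{(i,j)} \ket{\Psi_0} = 0$ on outside sites, where $P^{(3)}$ already has expectation $1$), together with the convention that the outermost target couplings $\alpha_{i,n}$ and $\beta_{n,j}$ vanish, which reduces the boundary $P^{(1)}$- and $P^{(2)}$-expectations to $O(\delta)$.
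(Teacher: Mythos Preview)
Your approach is exactly the paper's: decompose $H\Sim = \Delta_1 H_0 + H_1$ with the same $H_0,H_1$, identify the ground space of $H_0$ via the same isometry $V\colon \ket{\phi}\mapsto \ket{\Psi_0}\otimes\ket{\phi}\otimes\ket{0}^{\otimes(WH-n^2)}$, compute $\Pi_- H_1 \Pi_-$, bound the per-edge discrepancy by $\Delta_2\delta$ over the at most $2n(n-1)$ active edges, and invoke \Cref{lem:firstorder} using $\|H_1\|=O(\Delta_2 WH)$. Where you go beyond the paper is in your cases (ii) and (iii): the paper simply asserts (``straightforward to check'') that $\Pi_- H_1 \Pi_-$ restricts to the sum over the $n\times n$ corner, whereas you correctly observe that this needs $\bra{\Psi_0}P^{(1)}_{(i,j)}\ket{\Psi_0}=\bra{\Psi_0}P^{(2)}_{(i,j)}\ket{\Psi_0}=0$ on outside sites together with the boundary convention $\alpha_{i,n}=\beta_{n,j}=0$---conditions not listed in \Cref{thm:HAexistence} but satisfied by the actual construction of $H_A$ (the flag projectors are onto basis states that do not occur outside $\Lambda\Sq$).
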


\begin{proof}
	
	We prove this result by applying Lemma~\ref{lem:firstorder}. To match the notation in Lemma~\ref{lem:firstorder}, let $H_0=H_A \otimes I_B +\sum_{i=1}^W\sum_{j=1}^H P^{(3)}_{(i,j)} \otimes \proj{1}_{(i,j)}$ and $H_1=\Delta_2 H_{AB}$. 
	We decompose $\mathcal{H}_B=(\field{C}^2)^{\otimes WH}=(\field{C}^2)^{\otimes n^2}\otimes (\field{C}^2)^{\otimes (WH-n^2)}=\mathcal{H}_{B_1}\otimes \mathcal{H}_{B_2}$, where $\mathcal{H}_{B_1}$ contains the $n^2$ qubits in the corner of the grid where $\bra{\Psi_0}P^{(3)}_{(i,j)}\ket{\Psi_0}= 0$.  
	Note that with respect to this split, the groundspace of $H_0$ is equal to $\ket{\Psi_0}\otimes(\field{C}^2)^{\otimes n^2} \otimes \ket{0}^{(WH-n^2)}$.
	
	We take $V: \ket{\phi} \rightarrow \ket{\Psi_0} \otimes \ket{\phi}\otimes \ket{0}^{\otimes(WH-n^2)}$ to be the local isometry from $(\field{C}^2)^{n^2}$ to $\mathcal{H}_A \otimes \mathcal{H}_B$, noting that $V V^{\dagger}= \proj{\Psi_0}\otimes I_{B_1}\otimes \proj{0}^{\otimes(WH-n^2)}$ is the projector onto the ground space of $H_0$. 
	It is straightforward to check that 
	\begin{align}\Pi_- H_1 \Pi_-&=\left(\proj{\Psi_0}\otimes I_{B_1}\otimes \proj{0}^{\otimes(WH-n^2)}\right)\Delta_2 H_{AB} \left(\proj{\Psi_0}\otimes I_{B_1}\otimes \proj{0}^{\otimes(WH-n^2)}\right)\\
	&= \proj{\Psi_0}\otimes \left(\begin{array}{l}
	\sum_{i=1}^{n}\sum_{j=1}^{n-1} \Delta_2 \bra{\Psi_0}P^{(1)}_{(i,j)}\ket{\Psi_0} h\Heis_{(i,j),(i,j+1)}\\[2mm]
	\quad + \sum_{i=1}^{n-1}\sum_{j=1}^{n}  \Delta_2 \bra{\Psi_0}P^{(3)}_{(i,j)}\ket{\Psi_0} h\Heis_{(i,j),(i+1,j)}
	\end{array} \right) \otimes \proj{0}^{\otimes(WH-n^2)}\\
	&=V \tilde{H}\Tar V^{\dagger}
	\end{align}
	where 
	\[\tilde{H}\Tar = \sum_{i=1}^{n}\sum_{j=1}^{n-1} \Delta_2 \bra{\Psi_0}P^{(1)}_{(i,j)}\ket{\Psi_0} h\Heis_{(i,j),(i,j+1)}+\sum_{i=1}^{n-1}\sum_{j=1}^{n} \Delta_2 \bra{\Psi_0}P^{(3)}_{(i,j)}\ket{\Psi_0} h\Heis_{(i,j),(i+1,j)}\]
	Finally, recalling that by Lemma~\ref{thm:HAexistence}   \[\left|\bra{\Psi_0}P^{(1)}_{(i,j)}\ket{\Psi_0}-\frac{\alpha_{ij}}{\Delta_2}\right| \le \delta  \qquad \text{ and } \qquad \left| \bra{\Psi_0}P^{(2)}_{(i,j)}\ket{\Psi_0}-\frac{\beta_{ij}}{\Delta_2}\right| \le \delta,  \]
	and so 
	\begin{align*}
	\norm{\tilde{H}\Tar-H\Tar} &\le \sum_{i=1}^{n}\sum_{j=1}^{n-1} \Delta_2 \delta \norm{  h\Heis_{(i,j),(i,j+1)}}+\sum_{i=1}^{n-1}\sum_{j=1}^{n}\Delta_2 \delta \norm{  h\Heis_{(i,j),(i+1,j)}} \\
	&= 2n(n-1)\Delta_2 \delta \le \epsilon/2.
	\end{align*}
	The result now follows from Lemma~\ref{lem:firstorder} and  noting that $\norm{H_{AB}}  \le  WH$.
\end{proof}


\label{sec:TIuniversality}

\subsection{Simulator Hamiltonian Construction}\label{sec:construction-overview}
In this section we constructively prove the existence of a Hamiltonian described in \Cref{thm:HAexistence}.
The Hamiltonian $H_A$ therein will be defined on a square qudit lattice $\Lambda\Sim$, with one- and two-local horizontal and vertical nearest neighbour interactions; these interactions will be translationally invariant.

The overall construction will be as follows.
\begin{enumerate}
	\item We construct a tile set that translates the lattice side lengths $W \times H$ to binary numbers on the left and lower edge, respectively; $W=2^n + 2^b$ for some $n,b\in\field N$, $n<b/2$, and $H\in\field N$, $b \le H$ arbitrary.
	\item We use the single $1$ located at position $b$ offset from the lower left corner on the left edge to mark a right triangle of side length $b$ in the lower left of $\Lambda\Sim$.
	\item We use the single $1$ located at position $n$ offset from the lower left corner on the left edge to mark a square of size $n\times n$ in the lower left of $\Lambda\Sim$, which sits atop the triangle.
	\item Define a history state Hamiltonian that executes a quantum Turing machine with a 1D tape, which is defined such that it starts in the lower left corner of $\Lambda\Sim$, continues horizontally until it encounters the diagonal boundary of the triangle, where it wraps around towards the left edge. In the same fashion it winds upwards across the entire marked triangle.
	\item The program that the history state Hamiltonian executes is such that at each location $(i,j)$ within the marked square a qubit $\ket{\theta_{i,j}}$ is rotated into a superposition
	\begin{equation}\label{eq:theta-state}
	\ket{\theta_{i,j}} := \cos{\theta_{i,j}} \ket 0 + \sin{\theta_{i,j}} \ket 1,
	\end{equation}
	where the $\theta_{i,j}$ are specified (to some given precision) within the binary string on the horizontal lower edge of the lattice, i.e.\ the one encoding the number $H$; this is repeated for two families of real numbers necessary to satisfy \Cref{lem:H_AB}.
\end{enumerate}

Of course, since we do not have an arbitrary gate set available, the Turing machine will have to approximate the single qubit rotations, yielding only approximate angles $\ket{\tilde\theta_{i,j}} \approx \ket{\theta_{i,j}}$.
This is achieved by a standard application of Solovay-Kitaev; where we note that the target rotation can be approximated to high precision, as the circuit depth of an SK-compiled single-qubit gate is $\BigO(\log^4 \epsilon^{-1})$ in the required target precision $\epsilon$.

We will formalize this construction in the following technical lemma; a corollary of which will be that the resulting Hamiltonian immediately satisfies \Cref{thm:HAexistence}.
\begin{lemma}\label{lem:ham-sim-props}
	Let $n\in\field N$, and a polynomial $p$ such that $\delta(n) = 1/\exp p(n)$.
	Take two families of real numbers $\{ \alpha_{ij} \}_{i,j\in\{1,\ldots,n\}}$, $\{ \beta_{ij} \}_{i,j\in\{1,\ldots,n\}}$, and let $\epsilon>0$.
	Each $\alpha_{ij},\beta_{ij}\in [0,\Delta_2]$, as well as $\Delta_2$ and $\epsilon$, have bit precision $\Xi\in\field N$.
	Then the following exists.
	A spin lattice $\Lambda\Sim$ of side length $W\times H$, and interactions $h^{(1)},h\Hor,h\Ver$, such that
	\begin{enumerate}
		\item $H=\exp\poly(n,\Xi)$.
		\item $W=2^n+2^b$ for some $b\in\field N$, $n<b/2$, $b \ge \lceil \log_2 H \rceil$, $b=\poly(n,\Xi)$.
		\item $h^{(1)},h\Hor,h\Ver$ are independent of $W$ and $H$.
		\item All terms have matrix entries $\in S:=\{\pm 1,\pm\sqrt 2,  \pm\sqrt{\ii}\}$.
	\end{enumerate}
	Define
	\[
	H_A:=\sum_{i\in\Lambda\Sim} h^{(1)}_{(i)} + \sum_{i=1}^{W-1} \sum_{j=1}^H h\Hor_{(i,j),(i+1,j)} + \sum_{i=1}^W \sum_{j=1}^{H-1} h\Ver_{(i,j),(i,j+1)},
	\]
	and let $\ket{\Psi_i}$ be the eigenvectors of $H_A$, with spectrum $\lambda_0<\lambda_1<\ldots$.
	Denote with $\Pi\Tri,\Pi\Sq,\Pi\Flag,\Pi\Glag$ four projectors onto four distinct local basis states.
	Then
	\begin{enumerate}
		\item $\Delta(H_A) := \lambda_1-\lambda_0 \ge 1/\poly b.$
		\item All eigenstates $\ket{\Psi_i}$ are product across a partition of the lattice $\Lambda\Sim=\Lambda\Tri \times \Lambda\Tri^c$, where $\Lambda\Tri$ is a right triangular cut of the lattice with non-hypotenuse side length $b$ in the lower left of $\Lambda\Sim$, and $\Lambda\Tri^c$ its complement.
		\item Across this partition, all $\ket{\Psi_i} = \ket{\Phi_i} \otimes \ket{R_i}$ where the latter is a product state.
		\item $\Pi\Tri\ket{R_0}=0$, and $\Pi\Tri\ket{\Phi_0} = \ket{\Phi_0}$.
		\item $\ket{\Phi_0}=\ket{\phi\Hist} \otimes \ket{\phi\Tri}$, such that the latter is a product state that factors like $\ket{\phi\Tri} = \ket{\phi\Sq} \otimes\ket{\phi\Sq^c}$, where $\ket{\phi\Sq}$ is defined on a square of side length $n\times n$ in the lower left of $\Lambda\Tri$ (denoted $\Lambda\Sq$), and $\ket{\phi\Sq^c}$ on its complement; we have $\Pi\Sq\ket{\phi\Sq^c}=0$, and $\Pi\Sq\ket{\phi\Sq} = \ket{\phi\Sq}$.
		\item for all $(i,j)\in\Lambda\Sq$, we have
		\[
		\bra{\Psi_0} \Pi\Flag^{(i,j)} \ket{\Psi_0} = \tilde\alpha_{ij}
		\quad\text{and}\quad
		\bra{\Psi_0} \Pi\Glag^{(i,j)} \ket{\Psi_0} = \tilde\beta_{ij},
		\]
		and such that $|\tilde\alpha_{ij}-\alpha_{ij}/2\Delta_2| < \epsilon / 8\Delta_2 n^2$, and analogously for $\tilde\beta_{ij}$.
	\end{enumerate}
\end{lemma}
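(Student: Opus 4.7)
The plan is to assemble $H_A$ as a sum of three translationally invariant layers of 1- and 2-local terms, all with matrix entries drawn from the fixed set $S$: a Wang-tile layout layer, a Feynman--Kitaev history-state layer, and a boundary-reading initialisation layer. The QTM whose computation is encoded into the history state will read the binary data on the lower edge as a description of the target angles and produce, via Solovay--Kitaev-compiled single-qubit rotations, the approximations $\ket{\tilde\theta_{ij}}$ on two dedicated output qubits at each lattice site $(i,j)\in\Lambda\Sq$.

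First I would design a Wang-tile set whose unique zero-energy tiling on a $W\times H$ open-boundary grid exists only when $W=2^n+2^b$ with $n<b/2$ and $b\ge\lceil\log_2 H\rceil$, which encodes the binary digits of $W$ and $H$ on the boundary, uses the isolated $1$ at height $b$ on the left edge to carve out $\Lambda\Tri$ via standard diagonal-propagation tiles, and uses the $1$ at height $n$ to delimit $\Lambda\Sq\subset\Lambda\Tri$. This is standard; the constructions of \cite{Bausch2017,Bausch2019b,Gottesman2009} apply essentially verbatim. Translating the tile set into 1- and 2-local diagonal penalty terms with entries in $S$ yields a tiling Hamiltonian with unique product ground state and $\Omega(1)$ spectral gap, and its ground-state configuration immediately delivers the factorisation $\ket{\Psi_i}=\ket{\Phi_i}\otimes\ket{R_i}$ across $\Lambda\Tri\times\Lambda\Tri^c$, the further split $\ket{\Phi_0}=\ket{\phi\Hist}\otimes\ket{\phi\Tri}$, and the required projector properties for $\Pi\Tri$ and $\Pi\Sq$ on the appropriate subregions.

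Second, inside $\Lambda\Tri$ the tiles also carry tape decorations encoding a snake-shaped predecessor/successor map; the associated Feynman--Kitaev propagation terms
\[
\tfrac12\left(\1\otimes\ketbra{t}{t}+\1\otimes\ketbra{t+1}{t+1}-U_t\otimes\ketbra{t+1}{t}-U_t^\dagger\otimes\ketbra{t}{t+1}\right)
\]
can then be realised as nearest-neighbour 2-local interactions on the physical lattice, with all $U_t$ picked from a fixed universal gate set whose matrix entries lie in $S$. Kitaev's standard geometric-lemma argument shows that, on top of the tiling ground state, the low-energy block is the clock Hamiltonian with gap $\Omega(1/T^2)$, where $T$ is the QTM running time; choosing $T=\poly(n,\Xi)$ suffices to get spectral gap $\Omega(1/\poly b)$ since $b=\poly(n,\Xi)$. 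A boundary initialisation layer penalises any tape content at clock time $0$ that disagrees with the binary string on the bottom edge, forcing the QTM input to be the encoded $\{\alpha_{ij},\beta_{ij}\}$. By construction, the interactions $h^{(1)},h\Hor,h\Ver$ depend only on the fixed tile alphabet and the fixed finite gate set, and hence are independent of $W$ and $H$.

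Finally, the unique history ground state is $\ket{\phi\Hist}=T^{-1/2}\sum_t U_t\cdots U_1\ket{\text{init}}\otimes\ket t$, so $\bra{\Psi_0}\Pi\Flag^{(i,j)}\ket{\Psi_0}$ equals the time-average over $t$ of the probability that the $(i,j)$ output qubit is in state $\ket 1$, which is $\ketbra{0}{0}$ before the Solovay--Kitaev rotation sub-circuit for that site runs, $\ketbra{\tilde\theta_{ij}}{\tilde\theta_{ij}}$ afterwards, and one of $\BigO(\log^4\delta^{-1})$ transient intermediate states during. Padding the computation with identity steps so that the fraction of post-rotation time steps equals $\alpha_{ij}/2\Delta_2$ to bit precision $\Xi$ makes this time-average exactly $\alpha_{ij}/(2\Delta_2)$ up to an $\BigO(1/T)$ transient error, which is driven below $\epsilon/(8\Delta_2 n^2)$ by taking $T$ a polynomial factor larger than the precision demands; since the required denominators are $\le 2^\Xi$, this forces $T=\exp\poly(n,\Xi)$ and hence $H=\exp\poly(n,\Xi)$. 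The same argument with an independent output register handles $\tilde\beta_{ij}$. The main technical obstacle is designing a single QTM, with transition matrix entries restricted to $S$, whose program simultaneously reads the encoded boundary data, routes itself through $\Lambda\Sq$ in snake order to emit the output qubits at the correct sites, and pads the computation uniformly so that $\alpha_{ij}/2\Delta_2$ arises \emph{exactly} as a time-average; this is an engineering step in the spirit of Gottesman--Irani's QTM, and is where most of the care of the full proof must be concentrated.
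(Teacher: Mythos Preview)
Your architecture---tiling layer to carve out $\Lambda\Tri$ and $\Lambda\Sq$, a history-state QTM snaking through the triangle, and boundary initialisation from the binary expansion on the edge---matches the paper's. The genuine gap is in how you extract the numbers $\tilde\alpha_{ij}$ from the history state.

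You propose to encode $\alpha_{ij}/2\Delta_2$ as a \emph{time fraction}: pad so that the fraction of post-rotation clock steps at site $(i,j)$ equals $\alpha_{ij}/2\Delta_2$. But to realise a fraction with denominator $2^\Xi$ you need $T\ge 2^\Xi$, i.e.\ $T=\exp\poly(n,\Xi)$, as you yourself note. This directly contradicts your earlier claim that $T=\poly(n,\Xi)$, and it destroys the required spectral gap: the Kitaev gap scales like $1/T^2$, so you would obtain $\Delta(H_A)=1/\exp\poly(n,\Xi)$ rather than the stated $1/\poly b$ with $b=\poly(n,\Xi)$. (Relatedly, your derivation of $H=\exp\poly(n,\Xi)$ from $T$ is backwards: $H$ is large because the \emph{input string} $(A,B,\Delta_2,\epsilon)$ has $\poly(n,\Xi)$ bits and is written as the binary expansion of $H$; the QTM runtime is a separate parameter.) There is also a choreography problem you do not address: a single clock gives a single $T$, so arranging $n^2$ distinct post-rotation fractions $\alpha_{ij}/2\Delta_2$ would require the QTM to flip each output qubit at a precisely prescribed global time while simultaneously being physically present at that site.

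The paper avoids all of this by encoding $\alpha_{ij}/\Delta_2$ in the \emph{amplitude} of the output qubit via a Solovay--Kitaev-compiled rotation (this is the actual role of SK here, and it costs only $\operatorname{polylog}(1/\delta)$ gates per site, keeping $T=\poly(n,\Xi)$), and then uses a separate ``blinking'' marker that is toggled on for exactly half of the clock steps (cf.\ \cite[Lem.~16]{Bausch2018b}). The projector $\Pi\Flag^{(i,j)}$ is onto ``computation done'' $\wedge$ ``blink on'' $\wedge$ $\ket 1$ on the output qubit, so its expectation factorises as $\tfrac12\cdot\sin^2\tilde\theta_{ij}\approx\alpha_{ij}/2\Delta_2$. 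The time-fraction factor is a universal $1/2$ independent of $(i,j)$, and all the $(i,j)$-dependence sits in the amplitude; this is what lets $T$ stay polynomial and preserves the $1/\poly b$ gap.
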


We will prove \Cref{lem:ham-sim-props} constructively in \Cref{sec:-tiling,sec:-history};
assuming such a Hamiltonian exists, it is straightforward to see that it also satisfies all the conditions in \Cref{thm:HAexistence}.
\begin{proof}[\Cref{thm:HAexistence}]
	Take $H_A$ from \Cref{lem:ham-sim-props}, and set $P^{(1)}=\Pi\Sq$, $P^{(2)}=\Pi\Flag$, $P^{(3)}=\Pi\Glag$.
	Then $H_A/\Delta(H_A)$ has a spectral gap $\lambda_1-\lambda_0=1$; and by construction the families of real numbers $\{\alpha_{ij}\},\{\beta_{ij}\}$ are approximated to precision $\delta\le \epsilon/4\Delta_2 n^2$, by first rescaling each number by a factor of $2$ (which is where the discrepancy in the bound stems from in \Cref{lem:ham-sim-props}).
\end{proof}

\subsection{Tiling Layers}\label{sec:-tiling}
For a set of tiles $\mathcal T$, we commonly associate a Hilbert space $\hs=\field C^{|\mathcal T|}$ with every site; if the original tiles have a pair of functions
\[
f\Hor:\mathcal T\times \mathcal T \longrightarrow \{0, 1\}
\quad\quad
f\Ver:\mathcal T\times \mathcal T \longrightarrow \{0, 1\}
\]
which determine whether two tiles $t_1,t_2\in \mathcal T$ can lie next to each other on the lattice horizontally or vertically.
Then the corresponding Hamiltonian interaction reads
\[
h\Hor := \1-\sum_{t_1,t_2\in \mathcal T} f\Hor(t_1,t_2)\ketbra{t_1,t_2}
\quad\quad
h\Ver := \1-\sum_{t_1,t_2\in \mathcal T} f\Ver(t_1,t_2)\ketbra{t_1,t_2}.
\]
The terms are simply projectors onto the orthogonal complement of the allowed pairs of tiles, horizontally or vertically; as such, they are two-local by design and act on pairs of qudits of local dimension $|\mathcal T|$.

In addition to constraints on matching tiles, we can introduce one-local on-site projectors with a negative (bonus) or a positive (penalty) coefficient; we assume those are represented by a function $\bar f^{(1)} : \mathcal T \longrightarrow \field R$
which analogously to above translate to the one-local interaction term
\[
h^{(1)} := \sum_{t\in \mathcal T} f^{(1)}(t) \ketbra t.
\]

\begin{lemma}[{\cite[Lem.~1\&Cor.~2]{Bausch2015}}]\label{lem:tiling}
	For a square lattice $\Lambda$ and tileset $\mathcal T$ with tiling rules $f\Hor,f\Ver$ and bonus- and penalty terms $f^{(1)}$, define the tiling Hamiltonian
	\[
	H(\mathcal T) := \sum_{i\in\Lambda} h^{(1)} + \sum_{i\sim j} h\Hor_{i,j} + \sum_{k\sim l} h\Ver_{k,l},
	\]
	where $i\sim j$ sums over horizontally adjacent sites and $k\sim l$ over vertical ones.
	Then $H(\mathcal T)$ is diagonal, where each eigenvector represents a unique tiling configuration from $\mathcal T$ (including mismatching tiles);
	the ground state of $H(\mathcal T)$ is the highest net bonus tiling possible with the weighted tiling rules.
\end{lemma}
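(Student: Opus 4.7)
The statement is essentially an observation that the Hamiltonian is diagonal in a natural product basis. The plan is to fix that basis, check that every summand is diagonal in it, read off the eigenvalues as energies of tilings, and then identify the minimum-energy tiling.

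First I would fix the product basis $\mathcal{B} := \{\,\ket{T} := \bigotimes_{i\in\Lambda}\ket{T(i)} : T : \Lambda \to \mathcal{T}\,\}$, indexed by all tile-assignments $T$. Note $|\mathcal{B}| = |\mathcal{T}|^{|\Lambda|}$, which matches the dimension of the full Hilbert space, so $\mathcal{B}$ is indeed an orthonormal basis. Each basis vector $\ket{T}$ is in bijection with a tiling configuration of $\Lambda$ by $\mathcal{T}$, with no matching requirement imposed — mismatching tilings are included.

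Next I would verify term by term that each summand of $H(\mathcal{T})$ is diagonal in $\mathcal{B}$. The one-local term acts as $h^{(1)}\ket{T(i)} = f^{(1)}(T(i))\ket{T(i)}$ on the single site $i$ and as identity on the others, so $h^{(1)}_i\ket{T} = f^{(1)}(T(i))\ket{T}$. Similarly, for horizontally adjacent sites $i\sim j$,
\[
h\Hor_{i,j}\ket{T} = \bigl(1 - f\Hor(T(i),T(j))\bigr)\ket{T},
\]
and analogously for vertical edges. A sum of operators that are all diagonal in the same basis is diagonal in that basis, so $H(\mathcal{T})$ is diagonal with
\[
H(\mathcal{T})\ket{T} = E(T)\ket{T}, \qquad
E(T) = \sum_{i\in\Lambda} f^{(1)}(T(i)) + \sum_{i\sim j}\bigl(1-f\Hor(T(i),T(j))\bigr) + \sum_{k\sim l}\bigl(1-f\Ver(T(k),T(l))\bigr).
\]
This gives the first claim: every $\ket{T}\in\mathcal{B}$ is an eigenvector and the eigenbasis is labelled by tilings.

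For the second claim, the ground space is spanned by those $\ket{T}$ minimising $E(T)$. Each mismatching edge contributes $+1$, each matching edge contributes $0$, and the one-local part contributes the signed sum of on-site weights; so $-E(T)$ is exactly the \emph{net bonus} of the tiling $T$ (bonuses minus penalties minus mismatch penalties). Minimising $E(T)$ therefore picks the tiling(s) of highest net bonus, as claimed.

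The only thing that could conceivably go wrong is the bookkeeping of signs and of whether terms act on the correct factors; once the basis $\mathcal{B}$ is fixed and each summand is written explicitly in it, there is no real obstacle. I would not separately argue uniqueness of the ground tiling: the statement allows a degenerate ground space, and anyway the constructions in \Cref{sec:-tiling,sec:-history} enforce uniqueness by design of $f^{(1)}$, $f\Hor$, $f\Ver$.
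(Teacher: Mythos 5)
Your proof is correct and is the standard diagonalisation argument; the paper does not prove this lemma itself but imports it from \cite{Bausch2015}, and your verification matches what that reference does (fix the product basis of tile assignments, observe every summand is diagonal in it, and read off the ground state as the minimiser of the resulting classical energy function). Your closing remark is also right: the lemma's phrase ``unique tiling configuration'' refers to the labelling of the diagonal basis, not to non-degeneracy of the ground space, so no uniqueness argument is needed here.
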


We use a straightforward notation for the tiles in the following, where similar to Wang tiles the condition for a compatible placement for adjacent tiles is that their edge colors match.

\subsubsection{Binary Counter}
We define a tiling $\mathcal T_1$ that represents a binary counter, in the sense that the tiling pattern for the $0$\textsuperscript{th} row represents the increasing binary sequence $0_2=\cdots 000, 1_2=\cdots 001, 2_2=\cdots010, \ldots$.
To this end, we follow the construction of \cite{Patitz2014,Bausch2017}, which utilizes seven unique tiles:
\begin{align*}
\mathrm{bulk:}&\quad
\TTs(0,0,0,0,0)
\quad
\TTs(1,0,1,0,1)
\quad
\TTs(1,1,0,1,0)
\quad
\TTs(1,0,1,0,1)
\\
\mathrm{boundary:}&\quad
\TTs(R,1,R,-,R)
\quad
\TTs(-,B,0,B,B)
\quad
\TTs(-,B,R,-,S)
\end{align*}
In addition to the implicit tiling conditions $f\Hor$ and $f\Ver$ defined by these tiles, we give a bonus of $1/2$ for the S-tile via $\bar f^{(1)}$.
By \Cref{lem:tiling}, this means that the ground state of $H(\mathcal T_1)$ is unique, with an $S$ tile in the upper left corner of the lattice as shown in \Cref{fig:binary-tiling}; this ground state has energy $-1/2$, and all other eigenvalues are $\ge 0$.

\begin{figure}
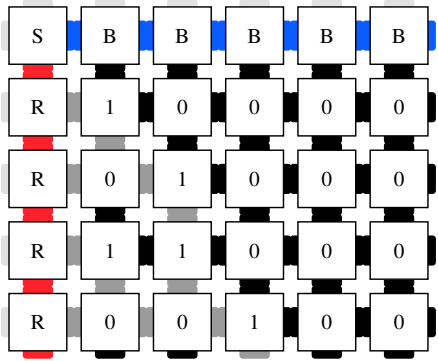

	\centering
	\begin{tiles}*
		\TT(-,B,R,-,S)
		\TT(-,B,0,B,B)
		\TT(-,B,0,B,B)
		\TT(-,B,0,B,B)
		\TT(-,B,0,B,B)
		\TT(-,B,0,B,B)
		
		\$
		
		\TT(R,1,R,-,R)
		\TT(0,0,1,1,1)
		\TT(0,0,0,0,0)
		\TT(0,0,0,0,0)
		\TT(0,0,0,0,0)
		\TT(0,0,0,0,0)
		
		\$
		
		\TT(R,1,R,-,R)
		\TT(1,1,0,1,0)
		\TT(0,0,1,1,1)
		\TT(0,0,0,0,0)
		\TT(0,0,0,0,0)
		\TT(0,0,0,0,0)
		
		\$
		
		\TT(R,1,R,-,R)
		\TT(0,0,1,1,1)
		\TT(1,0,1,0,1)
		\TT(0,0,0,0,0)
		\TT(0,0,0,0,0)
		\TT(0,0,0,0,0)
		
		\$
		
		\TT(R,1,R,-,R)
		\TT(1,1,0,1,0)
		\TT(1,1,0,1,0)
		\TT(0,0,1,1,1)
		\TT(0,0,0,0,0)
		\TT(0,0,0,0,0)
	\end{tiles}
	\caption{Binary counter tiling starting in the upper left corner; the height of the lattice will be written out in binary at the bottom of the lattice, in big Endian bit order.}
	\label{fig:binary-tiling}
\end{figure}

We now add a copy of this tileset $\mathcal T_1'$, mirroring each tile on the diagonal reaching from the lower left to the upper right; the local Hilbert space will then simply be a tensor product $(\field C^{|\mathcal T_1|})^{\otimes 2}$ of the original tiling Hilbert space of dimension $|\mathcal T_1|$.

The result of the two stacked tiling lattices is a binary description of the lattice's height $H$ written out at the bottom left of the lattice, in big Endian order; similarly, the second layer produces a binary description of the lattice width $W$ in binary, on the left edge and starting at the bottom in big Endian order.
Because there are two corner $S$ tiles in the ground state, each with a bonus of $1/2$, the net bonus is $1$.

\subsubsection{Coloring the Triangle and Square}\label{sec:triangle}

\begin{figure}[t]
	\centering
	\begin{tikzpicture}[x=2mm,y=2mm]
	\foreach \x in {0,...,34}{
		\pgfmathsetmacro\yb{34-\x}
		\foreach \y in {0,...,\yb}{
			\fill[yellow!65] (\x,\y) rectangle (\x+1,\y+1);
		}
	};
	\foreach \xy in {0,...,35}{
		\pgfmathsetmacro\yy{35-\xy}
		\fill[yellow] (\xy,\yy) rectangle (\xy+1,\yy+1);
	}
	\fill[green!50] (0,0) rectangle (12,12);
	\foreach \xy in {0,...,11}{
		\pgfmathsetmacro\yy{11-\xy}
		\fill[green!70] (\xy,\yy) rectangle (\xy+1,\yy+1);
	}
	
	\fill[black] (0,35) rectangle (1,36) (0,11) rectangle (1,12);
	\foreach \x in {0,1,3,4,5,8,10,12,13,15,19,20,22,23,24,30}{
		\fill[black] (\x,0) rectangle (\x+1,1);
	};
	
	\draw[step=1,black!40,line width=.5pt] (0,0) grid (40,40);
	
	\draw (-.5,11.5) -- (-2,11.5) node[left] {$2^n$};
	\draw (-.5,35.5) -- (-2,35.5) node[left] {$2^b$};
	\draw [decoration={brace,mirror,amplitude=5}, decorate] (0,-.5) -- (31,-.5) node[midway,below,yshift=-5] {$H$ in binary};
	\draw [decoration={brace,amplitude=5}, decorate] (-6,0) -- (-6,36) node[midway,left,xshift=-5,align=right] {$W=2^n+2^b$\\in binary};
	\end{tikzpicture}
	\caption{Triangle and square tiling pattern.
		The yellow triangle is created by tiles following a diagonal, starting at the location of the binary $1$ on the left lattice edge, offset at position $b$ from the lower left corner.
		Atop the yellow triangle, a similar construction is used to mark a square of size $n \times n$ (shown in light green).
		On the bottom edge, the height of the lattice is written out in binary; it will serve as input to the simulating History State Hamiltonian in \Cref{sec:-history}, which itself runs on the yellow triangle, and simulates a target lattice Hamiltonian within the green square.}\label{fig:triangle}
\end{figure}
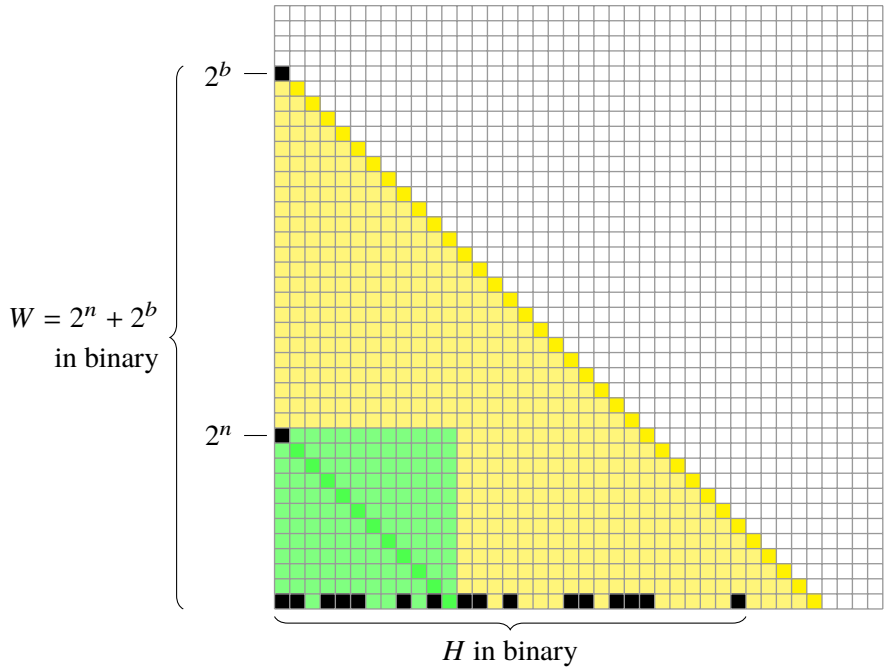

In order to create the triangular and square shapes outlined in \Cref{sec:construction-overview}, we define two tilesets that are defined on the background laid out by $\mathcal T_1$ and $\mathcal T_1'$ in the last section.
We leave the concrete set of tiles as an exercise to the reader, but emphasize that they are straightforward to write down.
The first new tileset, $\mathcal T_2$, will be used to draw a diagonal yellow line starting from the $1$ on the left lattice edge, offset at $b$ from the lower left edge, and continuing all the way down to the lower lattice edge.
The tiles underneath the yellow line are constrained to be marked in a yellow shade as well.

Similarly, $\mathcal T_2'$ is used to first draw a diagonal green line starting on the left edge at offset $n$ from the lower left lattice corner; the area below is shaded in green.
To achieve that the entire square of side length $n \times n$ is marked in green, we can for instance single out the first $n$ rows and $n$ columns and constrain them each to be in a marked configuration; the square is than simply the intersection of these two rectangles.

\subsection{History State Layers}\label{sec:-history}
We assume we are given a universal classical Turing machine $M$ over a set of internal states $Q$ and tape alphabet $A$.
This classical TM will serve as a ``clock'', i.e.\ as the automaton driving a quantum Thue system (QTS, see \cite[Def.~52]{Bausch2016}), which in turn will perform the Hamiltonian simulation on $\Lambda\Sim$.

To this end, we add a Hilbert space $\field C^d$ to each lattice vertex, large enough to contain all transition rules for the Turing machine and quantum Thue system.
This model of computation is known to be universal for the uniform family of quantum circuits constrained only by the available space and time; i.e.\ in our case, where the tape space is dictated by the size of the available lattice this computation will be space-bounded, with an up to exponentially-long runtime in the available space (cf.~\cite[Sec.~6.3\&Lem.~59]{Bausch2016}).

We also know that once translated to a history state Hamiltonian, we can analyse and treat the Hamiltonian as a unitary labelled graph (ULG, see \cite[Def.~35]{Bausch2016}), which by \cite[Th.~10]{Bausch2016a} exhibit a spectral gap shrinking inverse quadratically in the graph's diameter---under the condition that the ULG representing the ground state of the associated Hamiltonian is \emph{simple}, which on the side of the encoded computation simply means that there are no loops in the clock driving the QTS which lead to incompatible outcomes; as we can choose the classical Turing machine that serves as clock this is guaranteed in our case.

The Hamiltonian associated to the QTS---which in turn represents a quantum computation $U = U_TU_{T-1}\cdots U_2U_1$---is a so-called \emph{history state Hamiltonian} $H\Hist$, with ground state spanned by states of the form
\begin{equation}\label{eq:history-state}
\ket{\Psi_0} = \frac{1}{\sqrt{T-1}} \sum_{t=0}^T \ket t\ket{\psi_t} \in \hs_c \otimes \hs_q,
\quad\text{where}\quad
\ket{\psi_t} := U_t U_{t-1}\cdots U_2U_1 \ket{\psi_0},
\end{equation}
\newcommand\In{^{\mathrm{in}}}
for some initial state $\ket{\psi_0} \in \hs_q$.
By adding a penalty term of the form $\ketbra 0 \otimes \Pi\In$, where $\Pi\In = \1-\ketbra{\psi\In}$ 
for some initial configuration $\ket{\psi\In} \in \hs_q$, we can ensure that the groud state of $H\Hist$ is unique and such that $\ket{\psi_0} = \ket{\psi\In}$---i.e.\ the computation encoded in the ground state is correctly initialized.

In order to define the QTS transition rules and for consistency of notation, we will define the following set of tiles, representing two copies of the Turing machine symbols $a_i \in A$:
\begin{align*}
\mathrm{Set\ 1:}&\quad
\TTs(0,*1,1,*1,$a_1$)
\quad
\TTs(0,*1,1,*1,$a_2$)
\quad
\cdots
\quad
\TTs(0,*1,1,*1,$a_{|A|}$)
\\
\mathrm{Set\ 2:}&\quad
\TTs(0,1,1,1,$a_1$)
\hspace{.2mm}
\quad
\TTs(0,1,1,1,$a_2$)
\quad
\cdots
\quad
\hspace{.2mm}
\TTs(0,1,1,1,$a_{|A|}$)
\end{align*}
The head of the Turing machine will sit between tiles from Set 1 and 2, by defining, for each internal state $q_i\in Q$, the head tiles
\[
\mathrm{Set\ 3:}\quad
\TTs(0,1,1,*1,\TMstate{q_i}{a_1})
\quad
\TTs(0,1,1,*1,\TMstate{q_i}{a_2})
\quad
\cdots
\quad	
\TTs(0,1,1,*1,\TMstate{q_i}{a_{|A|}})
\]
such that the head state lies above the tape.
For instance, for a binary alphabet, one particular time configuration of said Turing machine could be
\[
\begin{tiles}*
\TT(0,*1,1,*1,0)
\TT(0,*1,1,*1,1)
\TT(0,*1,1,*1,1)
\TT(0,*1,1,*1,1)
\TT(0,*1,1,*1,0)
\TT(0,1,1,*1,\TMstate{q}{1})
\TT(0,1,1,1,1)
\TT(0,1,1,1,1)
\TT(0,1,1,1,0)
\end{tiles}
\]
In this way, horizontally, and by construction, there can only ever be a single TM head; this can be enforced with a regular expression as in \cite{Gottesman2009}.

If the local tile space is again defined as $\hs_c\otimes \hs_q$ for the overall tileset $T$ and quantum space $\hs_q$, a transition for this Turing machine from two neighbouring tiles $t_1t_2$ to tiles $s_1s_2$ is then simply defined by a local Hamiltonian term
\[
h := (\ket{t_1t_2}-\ket{s_1s_2})(\bra{t_1t_2}-\bra{s_1s_2}) \otimes \1
\]
Were we to allow the QTS to perform a quantum operation on $\hs_q$, e.g.\ a unitary $U\in SU(\hs_q)$, then we write
\[
h_{U} := \left(\ketbra{t_1t_2} + \ketbra{s_1s_2}\right)\otimes \1-\ketbra{t_1t_2}{s_1s_2}\otimes U^\dagger-\ketbra{s_1s_2}{t_1t_2}\otimes U.
\]
All this is standard and well-studied; it is obvious how to translate the TM's evolution within a horizontal section of the lattice into such transition terms.

What we need to address explicitly is how to map a one-dimensional TM to two dimensions, i.e.\ onto the yellow triangular subsection of $\Lambda\Sim$ as shown in \Cref{fig:triangle}, as we need to write out a specific two-dimensional pattern of phases as required by \Cref{lem:ham-sim-props}.
In order to achieve this, note how we specified the TMs tiles such that the top color was always black and the bottom one grey.
We duplicate the tiles with those colors switched, e.g.\
\[
\TTs(1,*1,0,*1,\TMstate{q}{1})
\]
We write the transition terms such that the TMs direction of ``left'' and ``right'' are interchanged, depending on whether the head is currently on an even or odd layer.
What is left is to specify how the TM head can be turned around, should it bounce into a specific boundary tile.

This is straightforward; if the red vertical edge delineates the left boundary of the triangular section in \Cref{fig:triangle}, we add the two-local transition terms
\[
\begin{tiles}*
\TT(R,*1,R,-,R)
\TT(0,1,1,*1,\TMstate{q_1}{b})
\TT(0,1,1,1,$a$)
\$
\TT(R,1,R,-,R)
\TT(1,1,0,1,$c$)
\TT(1,1,0,1,$d$)
\end{tiles}
\longmapsto
\begin{tiles}*
\TT(R,1,R,-,$q_2'$)
\TT(0,1,1,1,$b'$)
\TT(0,1,1,1,$a$)
\$
\TT(R,1,R,-,R)
\TT(1,1,0,1,$c$)
\TT(1,1,0,1,$d$)
\end{tiles}
\longmapsto
\begin{tiles}*
\TT(R,1,R,-,R)
\TT(0,1,1,1,$b'$)
\TT(0,1,1,1,$a$)
\$
\TT(R,1,R,-,$q_2''$)
\TT(1,1,0,1,$c$)
\TT(1,1,0,1,$d$)
\end{tiles}
\longmapsto
\begin{tiles}*
\TT(R,1,R,-,R)
\TT(0,1,1,1,$b'$)
\TT(0,1,1,1,$a$)
\$
\TT(R,*1,R,-,R)
\TT(1,1,0,*1,\TMstate{q_2}{c})
\TT(1,1,0,1,$d$)
\end{tiles}
\]
that correspond to a ``left'' move of the TM when it is in state $q_1$, reads $b$, and writes $b'$.
Similar terms can easily be added for switching layers on the right side of the tape, and it is obvious how this can be extended to non-straight boundaries (e.g.\ for the triangular patch constructed in \Cref{sec:triangle}).

The history state Hamiltonian encoding a given QTS $Q$ we denote with $H\Hist(Q)$.

\subsection{Grid Simulator QTS}
\begin{lemma}\label{lem:grid-qts}
	Let $n\in\field N$, $\epsilon>0$, and take two families of real numbers $A=\{ \alpha_{ij} \}_{i,j\in\{1,\ldots,n\}}$, $B=\{ \beta_{ij} \}_{i,j\in\{1,\ldots,n\}}$.
	Each $\alpha_{ij},\beta_{ij}\in [0,\Delta_2]$, as well as $\Delta_2$ and $\epsilon$, have bit precision $\Xi\in\field N$.
	Then there exists $b'\in\field N$, $b'=\poly(n,\Xi)$, and a $2$-local poly-time terminating QTS with space bound $b'$, which on input $(A,B,\Delta_2,\epsilon)$ written out in binary on the initial string, writes out a quantum state
	\[
	\ket{\Theta} := \bigotimes_{i=1}^b \bigotimes_{j=1}^i \begin{cases}
	\ket{\tilde\alpha_{ij}} \ket{\tilde\beta_{ij}} \ket{s_{i,j}} & i,j\le n \\
	\ket{r_{i,j}} & \text{otherwise},
	\end{cases}
	\]
	where $\| \ket*{\tilde\alpha_{ij}}-\ket*{\alpha_{ij}}/\Delta_2 \| \le \epsilon / 4\Delta_2n^2$, and where $\ket*{\alpha_{ij}}$ is defined in \Cref{eq:theta-state}; analogously for $\ket*{\tilde\beta_{ij}}$; and the $\ket*{r_{i,j}},\ket*{s_{i,j}}$ remain unspecified.
\end{lemma}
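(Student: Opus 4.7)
The plan is to exhibit the QTS as a classical Turing machine that uses Solovay--Kitaev compilation to produce a sequence of one- and two-qubit unitaries approximating each required rotation, and then applies those unitaries coherently to fresh $\ket 0$ qubits at the appropriate tape locations. Since the compilation itself is a classical computation, the workspace of the machine remains in a product basis state throughout, and only the designated output qubits ever carry quantum amplitude. This is what guarantees that the final state factors as a tensor product across the tape cells, as required.

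First I would fix a universal $2$-qubit gate set whose matrix entries lie in the allowed set $S=\{\pm 1,\pm\sqrt 2, \pm\sqrt{\ii}\}$ from \Cref{lem:ham-sim-props}---for example a Hadamard-like and $T$-like generating set, suitably normalized---and invoke Solovay--Kitaev to obtain, for any target single-qubit rotation $R(\theta)$ and target precision $\nu$, a gate sequence of length $O(\log^4(1/\nu))$ whose product approximates $R(\theta)$ to within $\nu$ in operator norm. The classical compilation algorithm runs in time $\poly\log(1/\nu)$ and uses the same amount of tape. I choose the target precision to be $\nu := \epsilon/(4\Delta_2 n^2)$, so that a single rotation applied to $\ket 0$ yields a state $\ket*{\tilde\alpha_{ij}}$ with $\|\ket*{\tilde\alpha_{ij}}-\ket*{\alpha_{ij}}/\Delta_2\|\le \nu$.

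Next, I would describe the main loop of the TM. Reading $(A,B,\Delta_2,\epsilon)$ from the initial portion of the tape in binary, the machine iterates over the $n^2$ pairs $(i,j)$; for each, it (i) computes $\alpha_{ij}/\Delta_2$ in fixed-point arithmetic, (ii) runs Solovay--Kitaev to obtain a gate word $G_1\cdots G_L$, (iii) moves the head to the output cell for $\ket*{\tilde\alpha_{ij}}$, initialized in $\ket 0$, and (iv) sweeps the head past that cell applying $G_1,\ldots,G_L$ in order, each as a QTS transition of the $h_U$ type defined in \Cref{sec:-history}. The same is repeated for $\beta_{ij}$. All other cells in the triangle that are never visited remain in whatever initial product state $\ket{r_{i,j}}$ they were prepared in, and the residual classical scratch at an $(i,j)$-cell is absorbed into $\ket{s_{i,j}}$. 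The 2D triangular layout is handled by the snake-like traversal established in \Cref{sec:-history}, so the machine effectively operates on a 1D tape that is embedded as the triangular patch in $\Lambda\Sim$.

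The resource bounds then follow by direct bookkeeping: the tape must hold the $O(n^2)$ input angles at $\Xi$ bits each, $O(n^2)$ output cells, and the SK scratch space of size $\poly\log(1/\nu) = \poly(n,\Xi)$, for a total space $b'=\poly(n,\Xi)$; the runtime is $n^2$ iterations each of cost $\poly\log(1/\nu)$, also $\poly(n,\Xi)$. Locality is automatic: each classical TM transition is 2-local by the encoding of \Cref{sec:-history}, and each $h_U$ gate is a 2-local QTS rule because the SK gate set is chosen to consist of one- and two-qubit unitaries. The hard part I anticipate is verifying the product-state structure of the output: I would argue that because the classical TM performs no intermediate measurements and never writes the quantum state of an output cell back into a classical register, the history-state branch reached at termination is a single computational path, so the workspace stays in a single basis state and the output qubits remain unentangled from it; the desired tensor product decomposition is then immediate, and the precision bound transfers from Solovay--Kitaev to the claimed inequality $\|\ket*{\tilde\alpha_{ij}}-\ket*{\alpha_{ij}}/\Delta_2\|\le \epsilon/4\Delta_2 n^2$ (and analogously for $\tilde\beta_{ij}$).
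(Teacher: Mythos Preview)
Your proposal is correct and follows essentially the same approach as the paper: invoke Solovay--Kitaev with target precision $\nu=\epsilon/(4\Delta_2 n^2)$ to approximate each of the $2n^2$ single-qubit rotations in $O(\log^4(1/\nu))=\poly(\log n,\Xi)$ steps, then appeal to the universality of $2$-local QTS to wrap the whole procedure in a $\poly(n,\Xi)$ space and time bound. The paper's proof is much terser---it does not spell out the main loop, the product-state justification, or the gate-set choice---so your additional detail on those points is welcome but not a departure in method.
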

\begin{proof}
	We first note that the input size $|(A,B,\Delta_2,\epsilon)| \le (2n^2 + 2)\Xi = \poly(n,\Xi)$.
	By the Solovay-Kitaev theorem, any rotation gate $R_{i,j}\ket 0 = \ket{\alpha_{ij}}$ can be approximated to precision $\delta$ in $T = \BigO(\log^4 \delta^{-1})$ many steps from a finite universal gateset \cite{Dawson2005}; where we require
	\[
	\delta = \frac{\epsilon}{4\Delta_2 n^2}
	\quad\text{and thus}\quad
	T = \BigO\left( \log^4\frac{\Delta_2 n^2}{\epsilon} \right) = \BigO\left( \log^4 n, \log^4 \frac1\epsilon, \Xi^4 \right) = \poly(\log^4 n, \Xi)
	\]
	as necessarily $\Delta_2,\epsilon^{-1} \le 2^\Xi$.
	Since we need to perform $2n^2$ of these rotations, overall we require $T' = \poly(n,\Xi)$ many steps.
	The rest of the claim follows from the fact that quantum Thue systems are a universal computational model, even when restricted to two-local rules, and by choosing $b=\poly n$ large enough for the at most polynomial runtime and space overhead from implementing the computation as a quantum Thue system.
\end{proof}

\begin{corollary}\label{cor:H_A-existence}
	Define
	\[
	H_A = H(\mathcal T_1) + H(\mathcal T_1') + H(\mathcal T_2) + H(\mathcal T_2') + H\Hist(Q),
	\]
	where $\mathcal T_1,\mathcal T_1',\mathcal T_2,\mathcal T_2'$ are the tilesets defined in \Cref{sec:-tiling}, and $H(\mathcal T)$ for a tileset is given in \Cref{lem:tiling}; $H\Hist(Q)$ is the history state Hamiltonian implementing the QTS from \Cref{lem:grid-qts}, as defined in \Cref{sec:-history}.
	Then $H_A$ satisfies \Cref{lem:ham-sim-props}.
\end{corollary}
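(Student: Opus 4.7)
The strategy is to verify each of the six numbered properties of \Cref{lem:ham-sim-props} for the Hamiltonian $H_A$ obtained by summing the four tiling Hamiltonians with the history state Hamiltonian $H\Hist(Q)$. The properties split naturally into structural ones (the first four), a spectral gap bound, and finally the quantitative projector expectations that feed into \Cref{lem:H_AB}.

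First I would observe that the four tiling Hamiltonians act on four mutually disjoint Hilbert-space factors at each site, so by \Cref{lem:tiling} each is diagonal in its respective product basis and their sum is jointly diagonal. The combined minimum-energy configuration picks out: a binary-counter pattern on $\mathcal T_1$ writing $H$ along the lower edge in big-endian order; a second binary counter on $\mathcal T_1'$ writing $W=2^n+2^b$ along the left edge; a yellow right triangle $\Lambda\Tri$ of leg $b$ traced by $\mathcal T_2$ starting from the lattice's left $1$-tile at height $b$; and a green square $\Lambda\Sq$ of side $n$ traced by $\mathcal T_2'$. Choosing $b=\poly(n,\Xi)$ large enough for the tape bound of \Cref{lem:grid-qts} immediately gives properties~(1)--(3) of the lemma, while property~(4) follows from direct inspection of the tiles and the at-most-$\sqrt 2,\sqrt{\ii}$ entries produced by the QTS transition terms $h_U$.

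Next I would handle the history-state layer. The tileset in \Cref{sec:-history} includes head, tape, and boundary tiles compatible only with the yellow region of $\Lambda\Tri$; I would enforce a $\ketbra 0 \otimes \Pi\In$ penalty at the lower-left corner, with $\Pi\In$ demanding that the initial QTS tape agrees with the binary encoding of $(A,B,\Delta_2,\epsilon)$ already laid out by $\mathcal T_1$ on the bottom edge, together with an on-site projector that freezes all history-state qudits outside $\Lambda\Tri$ into a fixed product state. This gives the tensor-product decomposition $\ket{\Psi_i}=\ket{\Phi_i}\otimes\ket{R_i}$ across $\Lambda\Tri\times\Lambda\Tri^c$ required by properties (2)--(4), and within $\ket{\Phi_0}$ the factorization $\ket{\phi\Hist}\otimes\ket{\phi\Tri}$ with $\ket{\phi\Tri}$ further splitting over $\Lambda\Sq$ and its complement as in property~(5).

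For the spectral-gap bound (property~1) I would invoke the ULG framework of \cite{Bausch2016a}: the clock Turing machine driving the QTS is classical and deterministic, hence the ULG representing the history state is simple, and its gap scales inversely with the square of the ULG diameter, which is at most the QTS runtime $T'=\poly(n,\Xi)=\poly(b)$. The tiling part contributes a constant gap. A Nullspace Projection Lemma (or direct block decomposition, since the tiling is diagonal) combines these to a $1/\poly(b)$ gap for the full $H_A$. This combination is the main technical obstacle, because one has to check that history-state excitations and tiling mismatches live in orthogonal sectors and that the bonus/penalty coefficients are set so that neither sector's excitations close the other's gap.

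Finally, property~(6) is a direct computation on the history state $\ket{\Psi_0}=(T'+1)^{-1/2}\sum_t\ket t\ket{\psi_t}$. By \Cref{lem:grid-qts} the QTS prepares, at each $(i,j)\in\Lambda\Sq$, a qubit in $\ket{\tilde\alpha_{ij}}$ (respectively $\ket{\tilde\beta_{ij}}$) from time $t_{ij}$ onward for a $1/2+o(1)$ fraction of the total runtime. Taking $\Pi\Flag$ and $\Pi\Glag$ to project onto $\ket 1$ on the respective qubits, the time-average gives $\bra{\Psi_0}\Pi\Flag^{(i,j)}\ket{\Psi_0}=\tfrac12\sin^2(\alpha_{ij}/\Delta_2)+\mathrm{error}$, yielding $\tilde\alpha_{ij}$ within $\epsilon/(8\Delta_2 n^2)$ of $\alpha_{ij}/(2\Delta_2)$ after absorbing the Solovay--Kitaev precision from \Cref{lem:grid-qts} and padding the QTS so that the prepared-state fraction is exactly $1/2$ up to sub-leading corrections. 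One then shifts $\Delta_2$ by a factor of $2$ relative to the nominal bit-precision to recover the form stated in \Cref{thm:HAexistence}.
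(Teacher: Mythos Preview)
Your treatment of properties~(1)--(5) and of the spectral gap is essentially the paper's argument: the tiling layers are diagonal and combine with the history-state Hamiltonian via the ULG bound of \cite{Bausch2016a}, giving a $1/\poly(b)$ gap. The gap is in your handling of property~(6).

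You assert that the prepared qubit at $(i,j)$ is present ``for a $1/2+o(1)$ fraction of the total runtime'', but \Cref{lem:grid-qts} does not give this. The QTS prepares the $2n^2$ qubits \emph{sequentially}, so the preparation times $t_{ij}$ differ across sites, and the fraction $(T'-t_{ij})/T'$ is site-dependent. Padding the QTS with idle steps drives all fractions toward~$1$, not toward a common~$1/2$; no amount of padding makes the fractions simultaneously equal to $1/2+o(1)$ for every $(i,j)$. As written, the resulting $\tilde\alpha_{ij}$ therefore carries an uncontrolled $(i,j)$-dependent bias. The paper avoids this by defining $\Pi\Flag^{(i,j)}:=\ketbra{T}\otimes\ketbra{i,j}\Flag$, where $\ketbra{T}$ is a locally-checkable ``done'' flag and $\ket{i,j}\Flag$ is a marker that is \emph{blinked on and off for exactly half} of the post-completion time steps, via \cite[Lem.~16]{Bausch2018b}. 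This yields the factor~$1/2$ exactly and uniformly across all sites, independent of the individual $t_{ij}$.

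A secondary point: your expression $\tfrac12\sin^2(\alpha_{ij}/\Delta_2)$ treats $\alpha_{ij}/\Delta_2$ as the rotation angle, which makes the expectation quadratic rather than linear in $\alpha_{ij}/\Delta_2$ and cannot be brought within $\epsilon/(8\Delta_2 n^2)$ of $\alpha_{ij}/(2\Delta_2)$ for general $\alpha_{ij}$. The angle must instead be chosen so that the overlap with the marker state equals $\alpha_{ij}/\Delta_2$; this is a trivial change in the QTS program, but it should be stated.
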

\begin{proof}
	Both the triangular and square colored region in \Cref{fig:triangle} can be associated with local projectors $\Pi\Tri,\Pi\Sq$.
	In order to obtain the required local projectors $\Pi\Flag$ and $\Pi\Glag$, we define them for each lattice site $(i,j)\in\Lambda$ via
	\[
	\Pi\Flag^{(i,j)} := \ketbra{T} \otimes \ketbra{i,j}\Flag
	\quad\text{and}\quad
	\Pi\Glag^{(i,j)} := \ketbra{T} \otimes \ketbra{i,j}\Glag,
	\]
	where $\ketbra{T}$ signals that the QTS computation from \Cref{lem:grid-qts} is done (which is locally checkable), and $\ket{i,j}\Flag$ or $\ket{i,j}\Glag$ are a specific local state we ``blink'' on and off for precisely \emph{half} of the time steps in the computation (for an explicit construction see \cite[Lem.~16]{Bausch2018b}).
	
	The Hamiltonian associated to the QTS is a history state Hamiltonian, as described at the start of \Cref{sec:-history}.
	The overlap of $\Pi\Flag$ and $\Pi\Glag$ thus equals $(T/2)/T=1/2$, and for the history state $\ket{\Psi_0}$ we get
	\[
	\bra{\Psi_0} \Pi\Flag^{(i,j)} \ket{\Psi_0} = \frac{\tilde\alpha_{ij}}{2}
	\quad\text{and}\quad
	\bra{\Psi_0} \Pi\Glag^{(i,j)} \ket{\Psi_0} = \frac{\tilde\beta_{ij}}{2}.
	\]
	By \cite[Th.~10]{Bausch2016a} the spectral gap $\Delta(H_A) \sim 1/T'^2 = 1/\poly b$.
	The rest of the claim follows by construction.
\end{proof}

\section{Simulating Translationally-Invariant Hamiltonians}\label{sec:TItarget}
In order to simulate a translationally-invariant Hamiltonian, it suffices to approximate the \emph{local} terms in a periodic fashion over the underlying square lattice $\Lambda\Sim$.
This is straightforward in our current construction, and will mostly hinge on modifying the QTS in the history state to be dovetailed by a computation that simply \emph{repeats} the creation of the family of $k\times k$ numbers $\{\alpha_{ij}\},\{\beta_{ij}\}$ in a periodic fashion---i.e.\ for $n\times n$ blocks of size $k\times k$---across the entire lattice (instead of being restricted to the lower left corner of $\Lambda\Sim$).

Without going through the entire construction in rigorous detail as before, we give a proof outline in the following, and leave the gaps to be filled by the reader.
We start with a modification of \Cref{lem:grid-qts} which adds a dovetailed procedure that loops the creation of the field of real numbers across the entire lattice grid $\Lambda\Sim$.

\begin{lemma}\label{lem:grid-qts-TI}
	Let $n\in\field N$, $\epsilon>0$, and take two families of real numbers $A=\{ \alpha_{ij} \}_{i,j\in\{1,\ldots,k\}}$, $B=\{ \beta_{ij} \}_{i,j\in\{1,\ldots,k\}}$ for some constant $k\in\field N$.
	Each $\alpha_{ij},\beta_{ij}\in [0,\Delta_2]$, as well as $n$, $\Delta_2$ and $\epsilon$, have bit precision $\Xi\in\field N$.
	Then there exists a $2$-local poly-time terminating QTS, which on input $(A,B,\Delta_2,\epsilon,n)$ written out in binary on the initial string, writes out a quantum state
	\[
	\ket{\Theta} := \bigotimes_{(i,j)\in\Lambda\Sim} \begin{cases}
	\ket{\tilde\alpha_{i\bmod n,j\bmod n}} \ket{\tilde\beta_{i\bmod n,j\bmod n}} \ket{s_{i,j}} & \text{for\ } i,j\le nk \\
	\ket{r_{i,j}} & \text{otherwise},
	\end{cases}
	\]
	where $\| \ket*{\tilde\alpha_{ij}}-\ket*{\alpha_{ij}}/\Delta_2 \| \le \epsilon / 4\Delta_2n^2$, and where $\ket*{\alpha_{ij}}$ is defined in \Cref{eq:theta-state}; analogously for $\ket*{\tilde\beta_{ij}}$; and the $\ket*{r_{i,j}}$ remain unspecified.
\end{lemma}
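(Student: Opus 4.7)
The plan is to follow the same proof template as \Cref{lem:grid-qts}, adding a dovetailed outer loop that periodically replicates the rotation-creation procedure across the lattice instead of executing it a single time in the corner. First, I would observe that the input size is $|(A,B,\Delta_2,\epsilon,n)| \le (2k^2+2)\Xi + \log n = \poly(\Xi)$, since $k$ is a fixed constant and $\log n \le \Xi$ by assumption on the bit precision. By Solovay--Kitaev, each of the $k^2$ distinct target rotations $R_{ij}\ket{0}=\ket{\alpha_{ij}}$ (and analogously each $\beta_{ij}$) can be approximated to precision $\delta = \epsilon/(4\Delta_2 n^2)$ using a gate sequence of length $T = \BigO(\log^4 \delta^{-1}) = \poly(\log n,\Xi)$ drawn from a fixed universal gateset \cite{Dawson2005}. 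Since there are only $2k^2 = \BigO(1)$ such sequences, the QTS first computes and stores them on a dedicated region of the tape in $\poly(\log n,\Xi)$ steps.

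Second, the QTS executes an outer loop that visits each of the $(nk)^2 = \poly(n)$ tape positions corresponding to the target $nk\times nk$ sublattice, and at each position with 2D coordinates $(i,j)$ retrieves the precomputed gate sequence indexed by $(i\bmod n,\,j\bmod n)$ and applies it to the two designated local qubits producing $\ket{\tilde\alpha_{i \bmod n,j\bmod n}}$ and $\ket{\tilde\beta_{i\bmod n,j\bmod n}}$. Positions with $i > nk$ or $j > nk$ are skipped, leaving those sites in their unspecified initial configuration $\ket{r_{i,j}}$. Since $k$ is constant the modular reductions are $\BigO(1)$ per step, and the total runtime is therefore $\poly(n)\cdot\poly(\log n,\Xi) = \poly(n,\Xi)$.

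The main technical point, shared with \Cref{lem:grid-qts}, is that the QTS tape is essentially one-dimensional, so to address 2D coordinates one needs explicit bookkeeping. This is handled by maintaining $(i,j)$ counters of $\BigO(\log n)$ bits in the QTS internal state, updated at each row/column wraparound of the winding that lays the 1D tape over the 2D region (as described in \Cref{sec:-history}). The only novelty compared to the triangular case is that the winding region now spans the whole rectangle $\Lambda\Sim$ rather than a right triangle, but the straight boundaries of a rectangular region are strictly simpler to handle than the diagonal boundary treated in \Cref{sec:-history}, so the same turn-around transition rules apply verbatim. Since two-local quantum Thue systems are a universal model for space-bounded quantum computation, this full program can be realised as a 2-local QTS with only polynomial space and time overhead, which yields the desired output state $\ket\Theta$ to the claimed precision and completes the proof.
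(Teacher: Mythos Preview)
Your argument is correct and reaches the same conclusion as the paper, but you organise the computation differently. The paper's proof black-boxes \Cref{lem:grid-qts}: it invokes that entire subroutine once per $k\times k$ patch, then copies the classical input $(A,B,\Delta_2,\epsilon)$ to the next patch and repeats, explicitly uncomputing any ancilla ``slack'' that protruded into neighbouring patches before moving on; the runtime is then bounded as $\poly(T' n^2)$ with $T'=\poly(k,\Xi)$ the single-patch cost. You instead precompute the $2k^2=\BigO(1)$ Solovay--Kitaev gate sequences once up front, and then run a single outer loop over all $(nk)^2$ positions, applying the appropriate precomputed sequence at each site via a modular index lookup. Your version avoids the need for the uncomputation step entirely (since the scratch space is used only once), at the price of having to maintain explicit $(i,j)$ counters during the 2D winding---which you address. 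Both routes give the same $\poly(n,\Xi)$ runtime and the same output state.

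One small slip: you write that ``since $k$ is constant the modular reductions are $\BigO(1)$ per step,'' but the indices in the statement read $(i\bmod n,\,j\bmod n)$, not $\bmod\, k$. This is almost certainly a typo in the statement (compare \Cref{lem:ham-sim-props-ti}, where the indices are taken $\bmod\, k$ as one would expect for $k\times k$ periodic data), and in any case a $\bmod\, n$ reduction costs only $\BigO(\log n)$ bit operations, so your overall $\poly(n,\Xi)$ bound is unaffected.
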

\begin{proof}
	The QTS performs the same operations as in \Cref{lem:grid-qts} to recreate a $k\times k$ grid of real numbers $\{\tilde\alpha_{ij}\},\{\tilde\beta_{ij}\}$; yet once it is done, it copies the (classical) input $(A,B,\Delta_2,\epsilon)$ to a new $k\times k$ grid square and recreates the same families of angles there.
	This is repeated until $n\times n$ patches of $k\times k$ numbers are filled.
	
	While we will require an overhead to create a single $k\times k$ patch that will protrude into \emph{other} patches, we can always demand that the QTS uncomputes the slack space and restores it to a clean ancilla state for the computation of the next patch.
	
	The runtime of the entire procedure is upper-bounded by $\poly(T'n^2)$, where $T'=\poly(k,\Xi)$ is the runtime from \Cref{lem:grid-qts} to create a single $k\times k$ patch of numbers.
\end{proof}

This modified computation allows us to formulate a variant of \Cref{lem:ham-sim-props} for simulating a translationally-invariant target system.
\begin{lemma}\label{lem:ham-sim-props-ti}
	Let $n\in \field N$, $\epsilon>0$, and take two families of real numbers $\{\alpha_{ij}\}_{i,j\in \{1,\ldots, k\}}$ and $\{\beta_{ij}\}_{i,j\in\{1,\ldots,k\}}$, for $k\in\field N$ constant.
	Each $\alpha_{ij},\beta_{ij}\in[0,\Delta_2]$, as well as $\Delta_2$ and $\epsilon$, have bit complexity $\Xi\in\field N$.
	Then the following exists.
	A spin lattice $\Lambda\Sim$ of side length $W\times H$, and interactions $h^{(1)},h\Hor,h\Ver$, 
	\begin{enumerate}
		\item $H=\exp \poly(k,\Xi)$
		\item $W=2^k + 2^b$ for some $a,b\in\field N$, $k<b/2$,
		\item $h^{(1)},h\Hor,h\Ver$ are independent of $W$ and $H$,
	\end{enumerate}
	and with matrix entries and $H_A$ as defined in \Cref{lem:ham-sim-props}.
	Denote with $\Pi\Flag,\Pi\Glag$ two projectors onto two distinct local basis states.
	Define $H_A$ as in \Cref{lem:ham-sim-props}.
	Then
	\begin{enumerate}
		\item[1.] $\Delta(H_A) \ge 1/\poly(n,k,\Xi)$.
		\item[6.] for all $(i,j)\in\Lambda\Sim$, we have for all $i,j\le nk$,
		\[
		\bra{\Psi_0} \Pi\Flag^{(i,j)} \ket{\Psi_0} = \tilde\alpha_{i\bmod k,j\bmod k}
		\quad\text{and}\quad
		\bra{\Psi_0} \Pi\Glag^{(i,j)} \ket{\Psi_0} = \tilde\beta_{i\bmod k,j\bmod k},
		\]
	\end{enumerate}
	and such that $|\tilde\alpha_{ij} - \alpha_{ij}/2\Delta_2| < \epsilon/8\Delta_2 n^2$, and analogously for $\tilde\beta_{ij}$.
\end{lemma}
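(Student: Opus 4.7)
The plan is to mirror the proof of \Cref{cor:H_A-existence}, replacing two ingredients from the non-translationally-invariant construction. First I would keep the binary-counter tilesets $\mathcal{T}_1,\mathcal{T}_1'$ of \Cref{sec:-tiling} to encode the side lengths $W$ and $H$ in binary on the left and bottom edges of $\Lambda\Sim$, with the bottom edge now also encoding the parameter $n$ alongside the families $A$, $B$ and the error parameters. Second, I would replace the triangular carve-out of \Cref{sec:triangle} with a tileset that (a) marks out a sufficiently large rectangular working region containing the $nk \times nk$ patch on which the periodic pattern lives, and (b) supplies the directional information used by the QTS head to snake back and forth along rows, exactly as in \Cref{sec:-history} but reflecting off straight vertical edges instead of a diagonal. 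Third, in place of $H\Hist(Q)$ I would use $H\Hist(Q')$, where $Q'$ is the dovetailed QTS of \Cref{lem:grid-qts-TI}.

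With these ingredients in hand, the verification is a step-by-step rerun of \Cref{cor:H_A-existence}. The unique ground state of $H_A$ takes the form of a history state supported on the valid QTS trajectory, and by \cite[Th.~10]{Bausch2016a} the spectral gap satisfies $\Delta(H_A) \gtrsim 1/(T'')^2$, where $T'' = \poly(n,k,\Xi)$ is the runtime of $Q'$ established in \Cref{lem:grid-qts-TI}, yielding the required bound $\Delta(H_A) \ge 1/\poly(n,k,\Xi)$. The projectors $\Pi\Flag^{(i,j)}$ and $\Pi\Glag^{(i,j)}$ are again defined as $\ketbra{T_{\mathrm{halt}}} \otimes \ketbra{i,j}\Flag$ and similarly with $\Glag$, using the ``blinking'' ancilla of \cite[Lem.~16]{Bausch2018b} activated on exactly the intended fraction of those time steps on which the computation has terminated, so that the history-state expectations reproduce $\tilde\alpha_{i\bmod k,j\bmod k}/2$ and $\tilde\beta_{i\bmod k,j\bmod k}/2$. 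The approximation bound $|\tilde\alpha_{ij} - \alpha_{ij}/(2\Delta_2)| < \epsilon/(8\Delta_2 n^2)$ then follows directly from the Solovay--Kitaev estimate internal to \Cref{lem:grid-qts-TI}.

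The main technical obstacle is the design of $Q'$ itself, and specifically its management of ancillary tape space. When the QTS finishes populating one $k\times k$ patch and moves on to the next, every ancilla register used for intermediate Solovay--Kitaev states, loop counters, and scratch copies of the input must be uncomputed and restored to a clean designated state before the next patch's computation begins; otherwise the unitary labelled graph underlying $H\Hist(Q')$ will fail to be simple and the inverse-polynomial gap of \cite[Th.~10]{Bausch2016a} will not apply. Because the classical input $(A,B,\Delta_2,\epsilon,n)$ sits on the bottom edge of the lattice as a static tile record, it can be re-read at the start of each patch without destructive copying, which makes the uncomputation bookkeeping tractable. Handling the lattice sites outside the $nk\times nk$ working region---which must not be labelled by $\Pi\Flag$ or $\Pi\Glag$ and can be left in the unspecified product state $\ket{r_{i,j}}$---is again a routine application of the tileset-as-projector machinery of \Cref{lem:tiling}, and the remaining product-structure claims inherited from items 2--5 of \Cref{lem:ham-sim-props} follow verbatim from the same argument.
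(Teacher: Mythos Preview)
Your proposal is correct and follows essentially the same approach as the paper: reuse the tiling and history-state machinery of \Cref{sec:-tiling,sec:-history}, substitute the dovetailed QTS of \Cref{lem:grid-qts-TI} for the single-patch QTS, and then rerun the spectral-gap and flag-projector analysis of \Cref{cor:H_A-existence}. The only notable divergence is that the paper explicitly \emph{keeps} the triangular and square tiling layers and instructs the QTS to ``ignore'' them while traversing the full lattice, whereas you propose to replace the triangle with a rectangular working region; both choices work, and since the paper itself frames this section as a sketch with gaps left to the reader, your more explicit rectangular variant is a reasonable way to fill them in.
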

\begin{proof}
	We take the TM constructed in \Cref{sec:-tiling,sec:-history} to calculate a $k\times k$-sized patch of coefficients given by the two families of real numbers, and encoded by the side length of the lattice.
	The difference is that the QTS constructed in \Cref{lem:grid-qts-TI} periodically creates this $k\times k$ patch of numbers across the entire lattice $\Lambda\Sim$ by \emph{ignoring} the underlying tiling square and triangle, and simply recreates all the encoded numbers $(\alpha_{i',j'},\beta_{i',j'})$ on sites $i'=i\bmod k$, $j'=j\bmod k$.
	The rest of the claim follows as in \Cref{lem:ham-sim-props}.
\end{proof}

We emphasize that the local dimension of the lattice model is---while finite---not restricted; as such, any overhead due to the necessarily more-complicated encoded QTS can be captured by simply increasing the local dimension of the simulating spin lattice $\Lambda\Sim$.

As a consequence, we get the following corollary, vaguely phrased on purpose.
\begin{corollary}
	There exist universal translationally-invariant Hamiltonians which can simulate a translationally-invariant $H\Tar$ with an at most polynomial spatial overhead.
\end{corollary}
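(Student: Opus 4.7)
The plan is to combine \Cref{lem:ham-sim-props-ti} with the perturbative reduction from \Cref{lem:H_AB,lem:firstorder} to produce a universal simulator for translationally-invariant target Hamiltonians. First I would argue, in analogy with \Cref{thm:XYsquare}, that it suffices to simulate a translationally-invariant target $H\Tar$ of the form
\[
H\Tar = \sum_{i,j} \alpha_{i\bmod k,\, j\bmod k}\, h\Heis_{(i,j),(i,j+1)} + \beta_{i\bmod k,\, j\bmod k}\, h\Heis_{(i,j),(i+1,j)}
\]
on an $N\times N$ square lattice, for some constant period $k$ and two families $\{\alpha_{ij}\},\{\beta_{ij}\}$ of real coefficients of bit precision $\Xi$; universality of Heisenberg interactions on the square lattice (\Cref{thm:XYsquare}) already permits reduction to this case with only polynomial spatial overhead.

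Next I would invoke \Cref{lem:ham-sim-props-ti} to obtain a translationally-invariant encoding Hamiltonian $H_A$ on a lattice $\Lambda\Sim$ of side length $W\times H$, whose unique ground state periodically encodes the $k\times k$ coefficient block through the expectation values $\bra{\Psi_0}\Pi\Flag^{(i,j)}\ket{\Psi_0} = \tilde\alpha_{i\bmod k,\, j\bmod k}$ and analogously for $\Pi\Glag$, across an $nk\times nk$ target region with $n$ chosen so that $nk\ge N$. Then I would follow the template of \Cref{lem:H_AB}, but sum the mediating terms $\Pi\Flag\otimes h\Heis$ and $\Pi\Glag\otimes h\Heis$ over \emph{every} edge of $\Lambda\Sim$ rather than only the lower-left $n\times n$ corner, and apply \Cref{lem:firstorder} to conclude that the resulting $H\Sim$ is a $(\Delta_1/2,\eta,\epsilon)$-simulation of $H\Tar$.

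For the spatial overhead, the lattice dimensions $W,H$ need only be large enough to contain the target region and to accommodate the QTS runtime from \Cref{lem:grid-qts-TI}, which is $\poly(n^2, k, \Xi) = \poly(N, k, \Xi)$; for constant $k$ and $\Xi$ this gives $W, H = \poly(N)$, yielding the desired polynomial spatial overhead.

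The main obstacle will be adapting the first-order perturbative analysis of \Cref{lem:H_AB} to the situation where the mediating projectors are active across the entire lattice rather than only in a small designated corner. One must verify both that the ground state of $H_A$ retains its product structure across the extended target region (so that the isometry $V$ of \Cref{lem:firstorder} still decomposes correctly) and that the approximation $\norm{\Pi_- H_1 \Pi_- - V\tilde H\Tar V^\dagger}\le\epsilon/2$ persists once the sums run over all $\Theta(N^2)$ edges rather than $O(n^2)$ of them; this may force a mild retuning of the precision $\delta$ demanded from \Cref{lem:ham-sim-props-ti} to absorb the additional $\Theta(N^2/n^2)$ factor.
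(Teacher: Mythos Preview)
Your proposal is correct and in fact more detailed than what the paper provides: the paper states this corollary ``vaguely phrased on purpose'' as an immediate consequence of \Cref{lem:grid-qts-TI,lem:ham-sim-props-ti} and offers no explicit proof. Your outline---reducing via \Cref{thm:XYsquare} to a periodic Heisenberg target, invoking \Cref{lem:ham-sim-props-ti} to obtain the periodic encoding in the ground state of $H_A$, and then rerunning the first-order perturbative argument of \Cref{lem:H_AB} with the mediating sums extended over the full $nk\times nk$ region---is precisely the intended argument, and the obstacle you flag (retuning $\delta$ to absorb the $\Theta(N^2)$ many error terms) is the right thing to check.

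One small caution on the overhead bookkeeping: in \Cref{lem:ham-sim-props-ti} the lattice height is stated as $H=\exp\poly(k,\Xi)$, reflecting that the input $(A,B,\Delta_2,\epsilon,n)$ is encoded in the binary expansion of $H$. For the polynomial-overhead claim you are treating $k$ and $\Xi$ as constants, which is the regime the corollary is asserting; under that assumption $\exp\poly(k,\Xi)$ is $O(1)$ and the dependence on $N$ coming from the QTS runtime $\poly(n^2,k,\Xi)$ is indeed polynomial, as you say. Just be explicit that ``polynomial spatial overhead'' here means polynomial in $N$ for fixed local interaction data.
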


\section{Qubit Interactions}
\label{sec:qubits}
In this section we consider qubit interactions restricted to the edges of a translationally-invariant interaction graph. First we extend the work of \cite{Oliveira-Terhal} to higher dimensions, introducing the notion of geometric locality, and showing that interactions on a (hyper-)cubic lattice can efficiently simulate all geometrially local Hamiltonians in the same number of spatial dimensions.

Then we go further to show that the same result holds for simulator Hamiltonians restricted to any translationally-invariant graph.

\subsection{Spatial Sparsity and Geometric Locality}

\label{sec:geometriclocality}
In \cite{Oliveira-Terhal}, Oliveira and Terhal proved that the local Hamiltonian problem is QMA-complete, even for 2-local Hamiltonians with an interaction graph 
to the edges of a square lattice.
They did this by first showing directly that the local Hamiltonian problem for Hamiltonians with a spatially sparse interaction graph is QMA-complete, where they defined a spatially sparse interaction graph as in Definition~\ref{def:spatialsparsity}.
They then showed that all spatially sparse local Hamiltonians can be simulated by Hamiltonians on a square lattice using a constant number of rounds of perturbation theory -- thereby keeping the strength of the interactions $\BigO(\poly(n))$.

We will identify some implicit assumptions in the definition of spatial sparsity, which are required in order for the proof that they can be simulated by 2D square lattice Hamiltonians to go through.

\begin{definition}[Spatial sparsity~\cite{Oliveira-Terhal}]
	\label{def:spatialsparsity}
	A spatially sparse interaction (hyper-)graph $G$ is defined as a (hyper-)graph in
	which 
	
	\begin{enumerate}[label=\roman*)]
		\item every vertex participates in $\BigO(1)$ hyperedges, 
		\item there is a straight-line drawing in the
		plane such that every hyperedge overlaps with $\BigO(1)$ other hyperedges \item the surface covered by
		every hyperedge is $\BigO(1)$.
		\label{cond:surface}
	\end{enumerate}
\end{definition}

We observe that this definition seems a bit problematic, in particular condition \ref{cond:surface} ``the surface covered by
every hyperedge is $\BigO(1)$'' (for hyperedges involving only two vertices (i.e. a normal edge), we interpret condition \ref{cond:surface} to mean ``the \emph{length} of every edge is $\BigO(1)$''). 
There are two problems here: first, any graph can trivially satisfy this condition simply by rescaling the entire graph and pushing all the vertices very close together (and this will not affect the other conditions i) and ii)). 
This clearly seems like ``cheating'', and there must be an implicit assumption that this is not allowed. 
Indeed this assumption is crucial when simulating a spatially sparse Hamiltonian on a square lattice, because the first step is to place a square grid over the graph, and then snap vertices to the nearest grid point, choosing a small enough grid spacing such that each vertex moves to a unique grid point.
This is not possible with $\Omega(1)$ grid spacing if there are many vertices very close together in an area of $\mathrm o(1)$ size.

Second, even with this implicit assumption, this condition does not seem to match what can be simulated with a constant number of rounds of perturbation theory.
Consider an interaction between three vertices located in the plane at coordinates $(-n,0)$, $(0,1/n)$, and $(n,0)$. The surface covered by this hyperedge is the area of this triangle $\frac{1}{2}2n \times \frac{1}{n}=1$, but the distance between each pair of vertices is $\ge n$. Therefore simulating this interaction with the subdivision gadgets we have will take more than a constant number of rounds of perturbation theory. 
To avoid this problem, the condition should be rephrased as ``each hyperedge is contained in a ball of $\BigO(1)$ radius'' or equivalently ``each qudit only interacts with other qudits at most $\BigO(1)$ distance away''.

We therefore propose the following alternative property, which we call \emph{geometric locality}, in order to avoid these problems as well as generalising the property to dimensions of Euclidean space higher than 2.

\begin{definition}[Geometric locality]
	\label{def:geometriclocality}
	A collection of (hyper-)graphs embedded in $\R^D$ are \emph{geometrically local} if there exist constants $c,C \in \R$ such that for each (hyper-)graph,
	\begin{enumerate}[label=\roman*)]
		\item there are no more than $c$ vertices in any ball of radius 1, and
		\item there are no (hyper-)edges between vertices that are more than distance $C$ apart.
		\end{enumerate}

\end{definition}

We say that a family of Hamiltonians is geometrically local in $\R^D$, if their interaction (hyper-)graphs are geometrically local in $\R^D$.
The definition of geometric locality therefore captures the idea of a Hamiltonian without long-range interactions by condition ii), while ensuring that one cannot ``cheat'' by putting a large number of particles into the same space by condition i).

Notice that the first two conditions of the spatial sparsity definition now follow from the definition of geometric locality! Since each vertex in a geometrically local hypergraph only participates in hyperedges with vertices in an $\BigO(1)$ sized ball and there are at most $\BigO(1)$ vertices in any $\BigO(1)$ sized ball, each vertex can participate in at most $\BigO(1)$ hyperedges. 
And each hyperedge can intersect at most $\BigO(1)$ other hyperedges.
Furthermore each hyperedge can involve at most $\BigO(1)$ vertices (another important condition, assumed implicit when discussing local Hamiltonians in the definition of spatial sparsity).

\subsection{Hypercubic Lattice}
\begin{lemma}
	\label{lem:GL=cubic}
	Let $\cS$ be either $\{XX+YY\}$ or $\{XX+YY+ZZ\}$. Then the family of Hamiltonians with geometrically local interaction hypergraphs embedded in $\R^D$ (for $D\ge2$) can be efficiently simulated by $\cS$-Hamiltonians, even when the interaction graph of the simulator Hamiltonian is a degree $3$ subgraph of the hypercubic lattice in $\R^D$.
\end{lemma}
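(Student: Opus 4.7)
The plan is to adapt the Oliveira--Terhal perturbative construction to $D\ge 2$ spatial dimensions, leveraging the Heisenberg/XY subdivision, fork, and crossing gadgets from \Cref{lem:PMgadgets}. First, I would preprocess $H\Tar$ to an equivalent 2-local Hamiltonian whose interaction graph is geometrically local in $\R^D$: since each hyperedge has $\BigO(1)$ vertices contained in an $\BigO(1)$-ball, standard hypergraph-to-graph reductions (with $\BigO(1)$ ancilla qubits placed inside that ball per hyperedge) preserve geometric locality, and a single layer of \Cref{lem:firstorder} brings us into an $\cS$-Hamiltonian by \Cref{thm:classification}.

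Next, I would overlay a hypercubic grid $\delta\Z^D$ with spacing $\delta=\Omega(1)$ chosen small enough that every vertex of the preprocessed graph snaps to a distinct grid point. This is possible precisely because of condition (i) of \Cref{def:geometriclocality}: at most $c$ vertices lie in any unit ball, so a sufficiently fine but still $\Omega(1)$-spaced grid separates them. After snapping, every edge of the target graph has length $\BigO(1)$ in grid units, and I would route each such edge as a path along the $\le \BigO(1)$ grid edges joining its endpoints, inserting mediator qubits and applying the subdivision gadget of \Cref{lem:PMgadgets} a constant number of times per path. In $D\ge 3$ I would exploit the extra coordinate direction(s) perpendicular to any chosen $2$-plane: each target edge gets its own disjoint ``wiring layer'' of lattice edges one step away, which sidesteps crossings entirely. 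In $D=2$ I would retain the Oliveira--Terhal crossing step using the crossing gadget of \Cref{lem:PMgadgets}, applied at most $\BigO(1)$ times per original edge by condition (ii) of \Cref{def:geometriclocality} combined with the $\BigO(1)$-overlap in any straight-line drawing.

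The final step is to enforce the degree-3 constraint. Each grid site currently hosts at most $\BigO(1)$ incident gadget wires, so I would iteratively apply the fork gadget of \Cref{lem:PMgadgets} to split any higher-degree vertex into a tree of degree-3 vertices, each fork's mediator qubits placed on currently-unused neighboring lattice sites in a fresh coordinate direction. A consistent orientation convention (always branching into the lowest-index unused direction) ensures mediators land on distinct sites and that the resulting interaction graph is a subgraph of the hypercubic lattice of degree at most 3.

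The main obstacle will be controlling the perturbative overhead. Each round of \Cref{lem:PMgadgets} requires the ``heavy'' coupling $\Delta$ to scale polynomially in the norms of the previous layer, so the entire construction is efficient only if each target edge passes through $\BigO(1)$ rounds of gadgets. The routing above achieves this---$\BigO(1)$ subdivisions per edge, $\BigO(1)$ crossings in $D=2$, and $\BigO(1)$ forks per vertex---but a careful bookkeeping of how the parameters $(\Delta,\eta,\epsilon)$ compose through parallel and sequential gadget applications (using that independent gadgets can be applied in parallel without interference, as noted after \Cref{lem:PMgadgets}) is needed to confirm $\|H\Sim\|=\poly(n,1/\eta,1/\epsilon,\Delta)$ and $m=\poly(n,1/\eta,1/\epsilon,\Delta)$, which together with efficient computability of the snapping/routing yields efficient simulation in the sense of \Cref{def:sim}.
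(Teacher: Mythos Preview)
Your overall strategy—reduce to a geometrically local $\cS$-Hamiltonian via \Cref{thm:classification}, snap vertices to a fine hypercubic grid, route edges along lattice paths using subdivision gadgets, use the crossing gadget for $D=2$, and reduce vertex degree with fork gadgets—is exactly the paper's. The difference is the order: the paper reduces degree to $3$ \emph{before} snapping and routing (first subdividing each edge to isolate high-degree vertices, then applying $\BigO(\log d)$ forks in series), whereas you postpone degree reduction to the final step. This reordering introduces two concrete problems.

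First, your routing step fails as stated. A vertex of the preprocessed geometrically local graph may have degree $d$ with $d>2D$ (\Cref{def:geometriclocality} only bounds $d$ by a constant depending on $c,C$, not by $2D$), but a hypercubic lattice site has exactly $2D$ neighbours. You therefore cannot launch $d$ vertex-disjoint lattice paths from a single snapped site, so the sentence ``each grid site currently hosts at most $\BigO(1)$ incident gadget wires'' presupposes a routing that is not realisable on the lattice when that constant exceeds $2D$. Second, the fork gadget of \Cref{lem:PMgadgets} emits an auxiliary term $H_1=\lambda\mu\,h_{12}$ between the two child qubits. If forking is performed \emph{after} the lattice embedding, those children are already pinned to lattice sites that need not be adjacent, so the new $h_{12}$ edge is not a lattice edge and triggers further subdivision rounds you have not budgeted for (and which may in turn raise degrees again).

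Both issues vanish under the paper's ordering: subdivide and fork down to degree $3$ while still in the abstract geometrically local setting—where the extra $h_{12}$ edges are automatically short and there is no $2D$ bottleneck—and only then snap to the grid and route. With that swap your argument is complete and coincides with the paper's.
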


\begin{proof}
	We first note that any geometrically local family of Hamiltonians must be $k$-local for $k=O(1)$, since each qudit can only interact with those in an $\BigO(1)$ ball around it, and there can only be $\BigO(1)$ other qudits in such a ball.
	By Theorem~\ref{thm:classification}, we know that $\cS$-Hamiltonians can efficiently simulate $k$-local Hamiltonians, when $k=O(1)$. Furthermore, since the proof method only uses local gadgets, the interaction graph of the simulator Hamiltonian is also geometrically local in $\R^D$.
	
	Then it will suffice to simulate geometrically local $\cS$-Hamiltonians. The following argument adapts the proof in \cite{Oliveira-Terhal} for the $D=2$ case, and uses the perturbative gadgets from \Cref{lem:PMgadgets}.
	We use the subdivision gadget on each edge to isolate high degree vertices. Then the degree of a vertex of degree $d>3$ can be reduced by using the fork gadget $\BigO(\log d)$ times in series. Doing this for all vertices in parallel results in a simulator Hamiltonian with a degree 3 geometrically local interaction graph in $\R^D$.
	
	Now place a fine (hyper-)cubic lattice over the interaction graph in $\R^D$, and snap vertices to the nearest grid point.
	Since there at most $\BigO(1)$ vertices in a ball of constant volume, at most $\BigO(1)$ vertices will snap to each grid point. 
	Therefore decreasing the grid spacing by a constant multiplicative factor will allow all of the vertices to be locally reassigned to a unique point on the hypercubic lattice. Each vertex has only moved by a constant distance so the interaction graph remains geometrically local.
	
	We now snap each edge in the interaction graph to a path along edges of the hypercubic lattice. We intend to simulate this interaction using multiple applications of the subdivision gadget along this path, but first we must ensure that no two paths overlap.
	Because there are at most $\BigO(1)$ edges in a ball of constant volume, each path can overlap with at most $\BigO(1)$ other paths. 
	By decreasing the grid spacing by a further constant factor if necessary, there will be enough space to locally re-route each path so that no two paths overlap.
	
	We now simulate each interaction along one of these paths using multiple applications of the subdivision gadget.
	Since each path is at most $\BigO(1)$ in length, this takes only $\BigO(1)$ rounds of perturbation theory, and so the weights are at most polynomially large.
	
	Finally we note that when $D=2$, it will not always be possible to reroute paths so that they do not cross. 
	This problem can be solved using the crossing gadget from \Cref{lem:PMgadgets}.
	
\end{proof}
\subsection{Translationally-Invariant Graphs}
\label{sec:TIgraphs}
We have shown that $\cS$-Hamiltonians can efficiently simulate all geometrically local Hamiltonians in $\R^D$, even when restricted to the edges of the hypercubic lattice in $\R^D$.
We will show that there is nothing special about the hypercubic lattice, and that this result holds for any connected, translationally-invariant graph in $\R^D$.

\begin{definition}
	Let $G$ be a graph embedded in $\R^D$. We say that $G$ is \emph{translationally invariant} if there exists a basis $\{\ket{v_i}\}_{i=1}^D$ for $\R^D$ such that $G$ is invariant with respect to a translation by $\ket{v_i}$ for any $i \in \{1,\dots ,D\}$.
\end{definition}

This notion of translationally-invariant graphs is slightly more general than the standard definition of a lattice in which every vertex must be at a location $\sum_{i=1}^D k_i \ket{v_i}$, for some basis $\ket{v_i}$ where all $k_i \in \Z$.

Before we prove the main result of this section about simulating generic Hamiltonians on translationally-invariant graphs, we must first prove the following lemma about translationally-invariant graphs in $\R^D$.
Recall that a graph $H$ is a minor of $G$ if $H$ can be obtained by deleting edges and vertices and contracting edges.
\begin{lemma}
	\label{lem:TIgraph=cubic}
	Let $G$ be a connected translationally-invariant graph embedded in $\R^D$. Then the hypercubic lattice is a minor of $G$. Furthermore, this happens in a regular way: there exists a connected subgraph $T$ and a basis $\{\ket{w_i}\}_{i=1}^D$ such that if each translation of $T$ with respect to $\{\ket{w_i}\}_{i=1}^D$ is contracted to a single vertex, we obtain a graph which has the hypercubic lattice as a subgraph.
	There is a deterministic algorithm to find such a $T$ and the basis $\{\ket{w_i}\}_{i=1}^D$.
	
\end{lemma}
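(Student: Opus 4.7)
The plan is to build $T$ and $\{\ket{w_i}\}$ via the finite quotient graphs $G/\Lambda$ and $G/\Lambda'$, where $\Lambda := \Z\{\ket{v_i}\}$, exploiting the fact that the holonomy of $G/\Lambda$ recovers the full translational lattice.

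First I would form $G/\Lambda$, which is finite (by the implicit local finiteness of $G \hookrightarrow \R^D$) and connected (as a quotient of a connected graph). After fixing orbit representatives $u_1,\ldots,u_N$ and a spanning tree $\tau$ of $G/\Lambda$, I lift $\tau$ inductively: starting with $\tilde u_1 := u_1$, for each tree edge $(o_i, o_j) \in \tau$ I choose $\lambda \in \Lambda$ so that $\{\tilde u_i,\, u_j + \lambda\} \in E(G)$ (possible because in $G/\Lambda$ the orbits $o_i, o_j$ are adjacent), and set $\tilde u_j := u_j + \lambda$. The resulting $T_0 := \{\tilde u_1, \ldots, \tilde u_N\}$ is a connected $\Lambda$-fundamental domain in $V(G)$. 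Each non-tree edge $e$ of $G/\Lambda$ then carries a well-defined \emph{displacement} $\mu(e) \in \Lambda$: the unique $\mu$ with $\{\tilde u_i,\, \tilde u_j + \mu\} \in E(G)$ for its endpoints.

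The crucial observation is that the displacements $\{\mu(e) : e \notin \tau\}$ generate $\Lambda$. Indeed, connectivity of $G$ provides, for any $\lambda \in \Lambda$, a path in $G$ from $\tilde u_1$ to $\tilde u_1 + \lambda$; projecting this path yields a loop in $G/\Lambda$ with holonomy $\lambda$, and since non-tree edges generate $\pi_1(G/\Lambda)$ the image of the holonomy map---which is all of $\Lambda$---is generated by $\{\mu(e) : e \notin \tau\}$. Select $D$ of these displacements $\ket{w_1}, \ldots, \ket{w_D}$ that are $\R$-linearly independent (possible because the generating set has full rank), and set $\Lambda' := \Z\{\ket{w_i}\} \subseteq \Lambda$, a full-rank sublattice. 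Now apply the same lift-spanning-tree construction to the finite connected quotient $G/\Lambda'$ (which has $N \cdot [\Lambda : \Lambda']$ vertices): the output is a connected subgraph $T \subseteq V(G)$ containing exactly one vertex from each $\Lambda'$-orbit, and consequently $\{T + \ket{\lambda'} : \ket{\lambda'} \in \Lambda'\}$ is a partition of $V(G)$.

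To verify the hypercubic structure, note that for each $i$ the non-tree edge responsible for the displacement $\ket{w_i}$ lifts to an edge $\{a,\, a + \ket{w_i}\} \in E(G)$ for some $a \in V(G)$. Translating by a suitable $\ket{\lambda'} \in \Lambda'$ (a symmetry of $G$ since $\Lambda' \subseteq \Lambda$), we may take $a \in T$; then $a + \ket{w_i} \in T + \ket{w_i}$, so this edge realises the adjacency $[T] \sim [T + \ket{w_i}]$ in the contracted graph. Translational invariance under $\Lambda'$ propagates the adjacency to every pair $[T + \ket{\lambda'}] \sim [T + \ket{\lambda'} + \ket{w_i}]$, exhibiting the hypercubic lattice with basis $\{\ket{w_i}\}$ as a subgraph of the contracted graph and hence as a minor of $G$. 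Every step---enumeration of orbits, BFS for spanning trees, integer Gaussian elimination to pick $D$ linearly independent displacements, and the iterative lift---is deterministic, supplying the algorithm the lemma requires. The principal obstacle is the holonomy surjectivity, i.e.\ confirming that $\{\mu(e) : e \notin \tau\}$ truly generates $\Lambda$; this is the one step which couples the graph-theoretic connectivity of $G$ to the algebraic structure on $\Lambda$ that the rest of the construction rests upon.
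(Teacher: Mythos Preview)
Your argument is correct and takes a genuinely different route from the paper's.  The paper builds $T$ and $\{\ket{w_i}\}$ by an explicit geometric induction on the dimension: starting from a single vertex $T_0$, at step $j$ it finds a shortest path from $T_j$ to a sufficiently distant translate of the surface $S_j = \bigcup_{k\in\Z^j}(T_j + \sum_i k_i\ket{w_i})$ in the $\ket{v_{j+1}}$ direction, truncates this path at its first self-intersection with a translate, and adjoins it to form $T_{j+1}$ together with a new basis vector $\ket{w_{j+1}}$; the bulk of the work is the case analysis verifying that $T_{j+1}$ is disjoint from all its nontrivial translates.  Your approach instead passes to the finite quotient $G/\Lambda$, lifts a spanning tree to obtain a connected fundamental domain, and reads off the $\ket{w_i}$ as linearly independent holonomies of non-tree edges---the surjectivity of holonomy onto $\Lambda$ being the one substantive step, which you handle correctly via connectivity of $G$.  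Your method is more conceptual and imports standard covering-space machinery for graphs, giving a cleaner disjointness argument (translates of a fundamental domain are automatically disjoint) at the cost of producing a generally larger $T$ (a full $\Lambda'$-fundamental domain rather than a tree of shortest paths); the paper's method is more hands-on and yields a $T$ with a visible geometric structure, but requires the careful three-part intersection check.  Both deliver the deterministic algorithm the lemma asks for.
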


Figure~\ref{fig:hexagonal} shows how Lemma~\ref{lem:TIgraph=cubic} applies to the hexagonal lattice in 2D.
Here $T$ contains just two vertices. Each quadrilateral with dashed edges contains a translation of $T$.
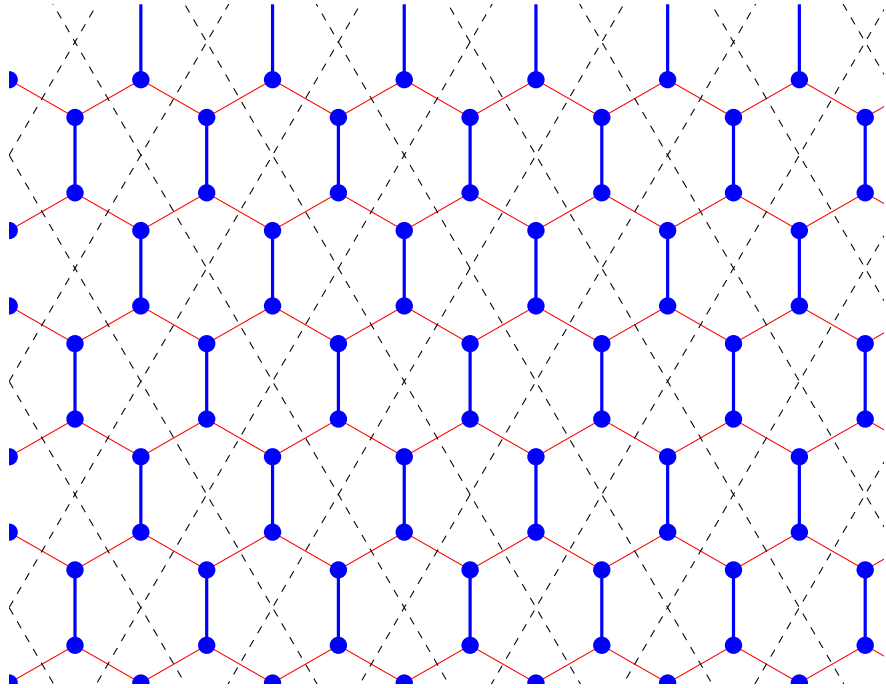
\begin{figure}
	\centering
	\begin{tikzpicture}[scale=0.5]
	
	\pgfmathsetmacro{\x}{6}
	\pgfmathsetmacro{\y}{2}
	\clip (0,0) rectangle (23,18);
	\foreach \j in {0,...,\y}{
		\begin{scope}[shift={(0,6*\j)}]
		
		\foreach \i in {0,...,\x}{
			\begin{scope}[shift={({2*sqrt(3)*\i},0)}]
			\draw[red] (0,0) -- (30:2)-- ++(330:2);
			\draw[red] (0,4) -- ++(330:2)-- ++(+30:2);
			\draw[blue,very thick] ({sqrt(3)},1) -- ++(0,2);
			\node[qubit,blue] at (0,0){};
			\node[qubit,blue] at (0,4){};
			\begin{scope}[shift={(30:2)}]
			\node[qubit,blue] at (0,0){};
			\node[qubit,blue] at (0,2){};
			\end{scope}
			\end{scope}}
		
		\foreach \i in {1,...,\x}{
			\begin{scope}[shift={({2*sqrt(3)*\i},0)}]
			\draw[blue,very thick] (0,4)-- (0,6);
			\end{scope}}

		\end{scope}
	}
	
	\foreach \i in {-10,...,3}{
		\draw[dashed] (0,2+6*\i) -- ++(60:100);
		\draw[dashed] (0,2-6*\i) -- ++(-60:100);}

	\end{tikzpicture}
	\caption[Identification of hexagonal lattice with square lattice]{The hexagonal lattice is isomorphic to a square lattice where each vertex has been replaced by two vertices in a subgraph $T$. Each translation of $T$ is coloured blue, and edges between translations of $T$ are coloured red.}
	\label{fig:hexagonal}
\end{figure}

\begin{proof}
	We identify each vertex $x \in G$ with the vector $\ket{x}\in \R^D$ representing its location in the embedding in $\R^D$.
	
	Note that the first statement, that the hypercubic lattice is a minor of $G$, follows from the rest of the Lemma.
	Let $\{\ket{v_i}\}_{i=1}^{D}$ be the basis of $\R^D$ with respect to which the graph is invariant. We will show how to find an alternative basis $\{\ket{w_i}\}_{i=1}^{D}$ and a finite connected subgraph $T$, such that the graph is also invariant with respect to this basis, and that for all $i \in \{1,\dots , D\}$, $T$ and $T+\ket{w_i}$ have one edge connecting them and that 
	\[T \bigcap \left(T+\sum_{i=1}^D k_i \ket{w_i}\right) = \emptyset \quad \text{ for any } (k_1,\dots,k_D) \in \Z^D/\{0^D\} \]
	
	We prove that this subgraph $T$ exists by induction on $j,$ the number of $i$ for which this condition holds with each step.
	
Pick any vertex $u \in G$ and let $T_0=u$.
 Now let $0\le j\le D$, and assume that we have a linearly independent set of vectors $\{\ket{w_i}\}_{i=1}^{j}$ and a connected graph $T_j$ such that $T_j$ is disjoint with $T_j +\sum_{i=1}^{j} k_i \ket{w_i}$ for all $k \in \Z^{ j} / \{0^{ j}\}$; and that there is an edge between $T_j$ and $T_j + \ket{w_i}$ for all $i \le j$.
		
		Let $S_j$ be the surface effectively spanned by the $T_j$, defined as:
		\[S_j= \bigcup_{k \in \Z^j}\left(T_j + \sum_{i=1}^{j} k_i \ket{w_i}\right)\]
and let $s_j\ge 0$ be the largest integer such that $S_j$ intersects $S_j +\ket{v_{j+1}}$.
		
		Find a shortest path $p_j$ between $T_j$ and $\bigcup_{m >s_j}\left(S_j + m\ket{v_{j+1}}\right)$. 
		Let $x$ be the first vertex along this path (starting from $T_j$) which intersects with a translation of either $p_j$ or $T_j$ in the positive $\ket{v_{j+1}}$ direction; that is, $x$ is a vertex which is in $p_j$ and also in \[(p_j\cup T_j)+ l_{j+1}\ket{v_{j+1}}+\sum_{i=1}^{j} l_i \ket{w_{i}} \text{ for some } l \in \Z^{j+1} \text{ with } l_{j+1}>s_j.\]
		
		Then let $p'_j$ be the path $p_j$ up to but not including the vertex $x$ (in the direction starting from $T_j$) and  set
		\[T_{j+1}=T_j\cup p'_j \quad \text{ and } \quad \ket{w_{j+1}}= l_{j+1}\ket{v_{j+1}}+\sum_{i=1}^{j} l_i \ket{w_{i}}.\]
		There is an edge between $T_{j+1}$ and $T_{j+1}+\ket{w_i}$ for all $i\le j$, by the induction hypothesis that the same condition holds for $T_j$.
		By definition of $p'_j$, there is also an edge from $T_{j+1}$ to $T_{j+1}+\ket{w_{j+1}}$.
		
We need to check that $T_{j+1}$ does not intersect with any translations of itself. We will do this for $T_j$ and $p'_j$ separately.
		\begin{enumerate}
			\item $T_j \bigcap \left(T_j +\sum_{i=1}^{j+1} k_i\ket{w_i}\right)$ for $k \neq 0^{j+1}$\\
			If $k_{j+1}=0$, then this intersection is empty by the inductive hypothesis.
		If $k_{j+1}\neq 0$, then this intersection is empty because $\ket{w_{j+1}}= l_{j+1}\ket{v_{j+1}}+\sum_{i=1}^{j} l_i \ket{w_{i}}$ and $l_{j+1}>s_j$, the largest integer such that $S_j \cap S_j+s_j\ket{v_{j+1}}$ is non-empty.
		
		\item $p'_j \bigcap \left(T_j +\sum_{i=1}^{j+1} k_i\ket{w_i}\right)$ for $k \neq 0^{j+1}$\\
		The path $p_j$ will not intersect with any translation of $T_j$ because it is a shortest path between the surfaces $S_j$ and $S_j+\ket{w_{j+1}}$.
		\item $p'_j \bigcap \left(p'_j +\sum_{i=1}^{j+1} k_i\ket{w_i}\right)$ for $k \neq 0^{j+1}$\\	By the definition of $p'_j$, this intersection is empty for $k_{j+1}>0$. And it is also empty for $k_{j+1}<0$, by a symmetric argument. 
		
		So now suppose for a contradiction that $p_j$ and  $p_j+\sum_{i=1}^j k_i \ket{w_i}$ intersect at a vertex $y$. Then since both paths are shortest paths between $S_j$ and $S_j+\ket{w_{j+1}}$, $y$ must be at the same position along the two paths (equal to the distance from $S_j$). But since one path is a translation of the other, we must have $y=y+\sum_{i=1}^j k_i \ket{w_i}$, which is a contradiction.
		
		\end{enumerate}

This inductive proof gives a deterministic iterative process by which one can construct a subgraph $T$ and the basis $\{\ket{w_i}\}_{i=1}^D$.	
	
\end{proof}

We are now ready to prove the main result of this section:

\begin{reptheorem}{thm:TIsimGL}
	Let $G$ be a connected translationally-invariant graph embedded in $\R^D$ for $D\ge 2$.
	Let $\cS$ be $\{XX+YY+ZZ\}$ or $\{XX+YY\}$.
	Then $\cS_G$-Hamiltonians are universal and can efficiently simulate all geometrically local Hamiltonians in $\R^D$.
\end{reptheorem}

\begin{proof}
	Recall that by Lemma~\ref{lem:GL=cubic}, $\cS$-Hamiltonians can efficiently simulate all geometrically local Hamiltonians in $\R^D$, even when restricted to the edges of a degree 3 subgraph of the hypercubic lattice in $\R^D$.
	And that by \Cref{thm:XYsquare}, $\cS$-Hamiltonians are universal even when restricted to the edges of a 2D square lattice, which is a subgraph of the hypercubic lattice in $\R^D$ for all $D\ge 2$.
	
	It therefore remains to show that $\cS_G$-Hamiltonians can efficiently simulate any $\cS$-Hamiltonian $H$ whose interactions are restricted to a degree 3 subgraph $F$ of the hypercubic lattice.
	We will show that one can subdivide the edges of $F$ to obtain a subgraph of $G$. With the subdivision gadgets of \Cref{lem:PMgadgets}, this will suffice to prove the claim.
	
	By Lemma~\ref{lem:TIgraph=cubic} there exists a subgraph $T\subseteq G$ and a basis $\{\ket{w_i}\}_{i=1}^D$ such that for all $k \in \Z^D$, the subgraphs $T(k)=T+\sum_{i=1}^D k_i \ket{w_i}$ are disjoint and there is an edge between $T(k)$ and $T(m)$ in $G$ if there is an edge between $k$ and $m$ in the hypercubic lattice. The subgraph $T$ and the basis $\{\ket{w_i}\}_{i=1}^D$ can be found using the constructive proof of Lemma~\ref{lem:TIgraph=cubic} in time which depends only on $G$ (and not on the target Hamiltonian $H$ or its interaction graph $F$).
	
	Since $F$ is a degree 3 subgraph of the hypercubic lattice, $k$ shares an edge with at most three other vertices $k(1),k(2),k(3)$ in $F$. 
	Let $x_1,x_2,x_3$ be the three vertices in $T(k)$ which have an edge with $T(k(1)),T(k(2)),T(k(3))$ respectively. 
	We will identify each vertex $k \in F$ with the central point $y(k)$ in $T(k)$ such that there are three non-overlapping paths from $y(k)$ to each $x_i$. 
	To see that such a central vertex $y$ exists for any three vertices $x_1,x_2,x_3$ in a finite connected graph $T$, first consider a path from $x_1$ to $x_2$ and then consider the shortest path from $x_3$ to this path. The point where these two paths meet is such a central vertex.
	
	We have now identified each vertex $k \in F$ with a vertex $y(k) \in G$ such that for each edge $(k,m) \in F$ there is a corresponding path form $y(k)$ to $y(m)$ in $G$, and that none of these paths overlap.
	Using the subdivision gadgets of \Cref{lem:PMgadgets}, we can simulate an $h$ interaction along each of these paths, to simulate the whole Hamiltonian $H$.
	
	Since $T$ contains only finitely many vertices (independent of $n$, the number of qubits of the target Hamiltonian $H$), finding the central vertex for each translation of $T$ takes $\BigO(1)$ time, and so finding all such central vertices takes time $\BigO(n)$. 
	Furthermore, each edge in $F$ is subdivided at most $\BigO(1)$ times and so only $\BigO(1)$ rounds of perturbation theory are required and thus the simulation is efficient.
\end{proof}
%
%
%
%
%

\section{Discussion}\label{sec:conclusion}
In this work we have shown that translationally-invariant interactions suffice to define universal Hamiltonian simulators.
Naturally---due to parameter counting---the size of the simulated target Hamiltonian is at most polylogarithmic in the size of the simulator; or in other words, there is an at least exponential spacial overhead present when simulating a non-translationally-invariant Hamiltonian with a translationally-invariant one.
Indeed, if the target Hamiltonian itself features translational symmetry, the overhead is at most polynomial, as shown in \Cref{sec:TItarget}.

In addition to the parameter counting overhead we also need to scale $\Lambda\Sim$ in accordance with the simulation precision; for fixed accuracy, the overhead just mentioned suffices. Yet since we need to be able to simulate a target system to \emph{arbitrary} precision, the overhead of the simulating Hamiltonian's system size could be accordingly larger.

While our construction demonstrates that many-body localization and similar effects (in fact, \emph{any} many-body effect) present in lattice systems without translational symmetry can be simulated in a translationally-invariant fashion, we again emphasize that due to the exponential space overhead the sector of the global Hilbert space in which these effects occur is itself doubly-exponentially small.
This effect is often generic in models with many-body localization \cite{Pietracaprina2019}; yet in our case it is an effect of the simulator's overhead, not intrinsic to the physics describing the target Hamiltonian.

Instead of simulating a specific non-translationally-invariant Hamiltonian, one can of course obtain a source of (pseudo-) randomness in a simpler fashion:
just like how the history state Hamiltonian in \Cref{sec:-history} is used to extract a background field of real numbers $\{(a_i,b_i)\}_{i\in\Lambda\Sq}$---which condition the coupling strengths of the target Hamiltonian---we can use the universal computational capabilities of the encoded Turing machine to produce a field of pseudo-random numbers from a small initial seed (e.g.\ obtained from the system size, and calculated by a Mersenne Twister).
Like that, the entire lattice area of $\Lambda\sim$ can be used to reproduce effects of many-body localization.
As it seems arguable that true randomness is required to observe many-body localization effects, this would result in a translationally-invariant lattice which, within its low-energy sector, reproduces a local Hamiltonian with randomly-varying coupling strengths across the spin lattice, and thus all observable effects emerging from this setup.

There are several open questions.
The first one is whether a similar universality result holds for translationally-invariant spin chains.
Since perturbation gadgets do not work in one dimension, we need to resort to different proof techniques. Another question left open is that of optimizing the polynomial/exponential overhead necessary in the number of qudits, as well as optimizing the local dimension; this question is of particular importance in case a simulation reduction is to be used in an actual experimental or numerical study setup.
Secondly, apart from translational invariance, there are of course other symmetries---such as mirror symmetry, or discrete rotational symmetries.
It is known that the local Hamiltonian problem for translationally-invariant spin chains remains QMA\textsubscript{EXP}-hard in the mirror-symmetric setting \cite[Ch.~6]{Gottesman2009}; yet how a universal Hamiltonian can emerge from this setup is not immediately obvious.
Finally it would of course be desirable to reduce the local dimension to be as low as possible while still retaining universality.

\section*{Acknowledgements}
We are grateful for helpful discussions with Nigel Cooper, Andrew Green and Robin Reuvers regarding many-body localization.
J.\,B.\ acknowledges support from the Draper's Research Fellowship at Pembroke College.
No new data were created during this study.

\printbibliography

\appendix
\end{document}